\newtheorem{defi}{Definition}[section]
\newtheorem{theorem}{Theorem}[section]
\newtheorem{prop}[theorem]{Proposition}
\newtheorem{proof}[theorem]{Proof}
\begin{document}

\begin{frontmatter}



\title{Incremental Clustering Techniques for \\ Multi-Party Privacy-Preserving Record Linkage}

\author[label1,label2]{Dinusha Vatsalan\corref{cor1}}
\address[label1]{Data61, CSIRO, Eveleigh, NSW 2015, Australia}
\address[label2]{Research School of Computer Science, The Australian National University, Canberra, ACT 2600, Australia}
\address[label3]{Institut f\"ur 
  Informatik, Universit\"at 
  Leipzig, Leipzig 04109, Germany}
\cortext[cor1]{Corresponding author.}
\ead{dinusha.vatsalan@data61.csiro.au}

\author[label2]{Peter Christen}
\ead{peter.christen@anu.edu.au}

\author[label3]{Erhard Rahm}
\ead{rahm@informatik.uni-leipzig.de}

\begin{abstract}
Privacy-Preserving Record Linkage (PPRL) supports the integration of sensitive information from multiple datasets, in particular the privacy-preserving matching of records referring to the same entity. PPRL has gained much attention in many application areas, with the most prominent ones in the healthcare domain. PPRL techniques tackle this problem by conducting linkage on masked (encoded) values. Employing PPRL on records from multiple (more than two) parties/sources (multi-party PPRL, MP-PPRL) is an increasingly important but challenging problem that so far has not been sufficiently solved. Existing MP-PPRL approaches are limited to finding only those entities that are present in all parties thereby missing entities that match only in a subset of parties. Furthermore, previous MP-PPRL approaches face substantial scalability limitations due to the need of a large number of comparisons between masked records. We thus propose and evaluate new MP-PPRL approaches that find matches in any subset of parties and still scale to many parties. Our approaches maintain all matches within clusters, where these clusters are incrementally extended or refined by considering records from one party after the other. An empirical evaluation using multiple real datasets ranging from $3$ to $26$ parties each containing up to $5$ million records validates that our protocols are efficient, 
and significantly outperform existing MP-PPRL approaches in terms of linkage quality and scalability.

\end{abstract}

\begin{keyword}
Data linkage \sep privacy \sep scalability \sep graph matching \sep multiple databases \sep subset matching 



\end{keyword}

\end{frontmatter}



\section{Introduction}
\label{sec:Introduction}


With the widespread collection of large-scale person-specific databases by many organizations, 
multiple large databases (held by different parties) often need to be integrated and linked to identify matching records that correspond to the same real-world entity~\cite{Vat17b,Clif02,Che19} 
for viable data mining and analytics applications. 
%
The absence of unique entity identifiers across different databases requires using commonly available personal identifying attributes, such as names and addresses, for integrating and linking records from those databases. The values in these quasi-identifiers (QIDs) are often dirty, i.e.\ contain errors and variations, or they can be missing, which makes the linkage task challenging~\cite{Chr12,Chi17}.
In addition, such attributes often contain sensitive personal information about the entities to be linked, and therefore sharing or exchanging such values among different organizations is often prohibited due to privacy and confidentiality concerns~\cite{Dur13,Kara15,Vat13}. 
Addressing these challenges, privacy-preserving record linkage (PPRL) has attracted increasing interest over the last two decades~\cite{Vat17b,Vat13} and been employed in several real applications.

For example, data from hospitals and clinical registries were linked with data from central cancer registries and from the Australian Bureau of Statistics using PPRL techniques for a study on surgical treatment received by aboriginal and non-aboriginal people with lung cancer~\cite{Con04}. Data from several cantonal and national registries were linked in Switzerland using Bloom filter-based PPRL to investigate long-term consequences of childhood cancer~\cite{Kue11}. 
In 2016, the Interdisciplinary Committee of the International Rare Diseases Research Consortium launched a task team to explore approaches to PPRL for linking several genomic and clinical data sets~\cite{Bak18}.

Further, the Office for National Statistics (ONS) in the UK established the program `Beyond 2011' to carry out research to study the options for production of population and socio-demographics statistics for England and Wales, by linking anonymous data to ensure that high levels of privacy of data about people are maintained~\cite{ONS13}.
Another application of PPRL in the domain of national security is to integrate data from law enforcement agencies, Internet service providers, businesses, as well as financial institutions, to enable 
identifying crime and fraud, or of terrorism suspects~\cite{Phu12}.

The majority of linkage techniques and frameworks have been developed for linking records from only two databases~\cite{Vat13,Kop10,Wan07}.
It is not trivial to extend existing PPRL techniques to multiple databases by sending the encoded
databases from all parties to a Linkage Unit ($LU$), where a $LU$ is an external party that has been used in several existing PPRL approaches for conducting or facilitating the linkage of encoded records sent to it by the database owners~\cite{Vat13}. 
At the $LU$, it would then become necessary to determine 
pair-wise similarities between records and to group similar records into clusters where one cluster is assumed to represent one entity~\cite{Has09c}. 
Only few basic grouping/clustering techniques have been described for multi-database linkage, with each of them having limitations as discussed in detail in Section~\ref{sec-related}. 
Such clustering schemes have been studied for general record linkage \cite{Has09c,Nan19,Sae18} but have received almost no attention so far for PPRL.  
Furthermore, sending all the encoded records from multiple
parties to the $LU$ has privacy risks. For example, with
Bloom filter-based encoding~\cite{Sch09} (to be described
in the next section), the more
Bloom filters the $LU$ receives the more likely it will be
able to attack these Bloom filter databases using
cryptanalysis attacks because more frequency information will
become available that can be exploited~\cite{Chr18,Chr18b}.
Only few techniques have been developed that can perform multi-party linkage in a privacy-preserving context (i.e.\ MP-PPRL). The main drawbacks of these small number of existing MP-PPRL approaches are that they either (1) only consider the blocking step to reduce the matching space~\cite{Chr11} but not how the matching is done,
(2) only support exact matching, which classifies record sets as matches if their masked QIDs are exactly the same~\cite{Chr12}, (3) are applicable to QIDs of categorical data only (however, linkage using QIDs of string data, such as names and addresses, is required in many real applications~\cite{Vat13,Vat14c}), or (4) they do not support subset matching where records that match across subsets of databases also need to be identified in addition to records that match across all databases. 
The primary challenge of MP-PPRL is the complexity of linkage, 
which generally is exponential with the number of databases
to be linked and their sizes~\cite{Vat16b}. This challenge
multiplies when matching records from any possible subset
across databases need to be identified.
%
%
%

\smallskip
\textbf{Contributions:}~ 
In this paper, we propose an efficient and scalable MP-PPRL protocol that allows subset matching between multiple large databases using a $LU$. 
$LU$-based approaches for PPRL are well suited for efficient linking of multiple large databases for practical applications, as the number of communication steps required among the database owners, as well as the risk of information leakage from a sensitive database to other database owners, are reduced when a $LU$ is used~\cite{Vat13}. 

We develop two variations of incremental clustering combined with a graph-based linkage for MP-PPRL where clusters of encoded records 
are iteratively merged and refined such that the output of clusters are the matching sets of records (i.e.\ each cluster represents a set of matching records that correspond to the same entity). 
Clustering-based approaches are deemed most suitable for holistic data integration, and have been used in several non-PPRL approaches for scaling data integration to many sources~\cite{Rah16,Ran15b}.
Compared to greedy mapping~\cite{Ken98} (as described in Section~\ref{sec-protocol}), our proposed incremental clustering methods perform significantly better in terms of linkage quality. 

We use counting Bloom
filter-based encoding~\cite{Vat16b} which has lower risk of privacy leakage as the frequency information available in counting Bloom filters is significantly less than basic Bloom filters~\cite{Vat16b}. Additionally, the risk
of collusion between different parties and the $LU$ can be
reduced in our incremental clustering approach by using
different encoding parameters in different iterations,
as we discuss in Section~\ref{subsec_privacy_analsis}.

We provide a comprehensive evaluation of our proposed approach which shows that it has a quadratic computation complexity in the size and the number of the databases that are linked. This complexity
is significantly lower compared to the exponential complexity of existing MP-PPRL approaches~\cite{Vat14c,Vat16b,Lai06}, as we theoretically and empirically validate in Sections~\ref{sec-analysis} and~\ref{sec-experiment} using large real voter and health datasets.

\smallskip
\textbf{Outline:}~ 
In Section~\ref{sec-preli} we provide the required preliminaries and in Section~\ref{sec-protocol} we describe our protocol for MP-PPRL. We analyze our protocol in terms of complexity, privacy, and linkage quality in Section~\ref{sec-analysis}, and validate these analyses through an empirical evaluation in Section~\ref{sec-experiment}. 
We discuss related work in MP-PPRL in Section~\ref{sec-related}. 
Finally, we conclude the paper with an outlook to future research directions in Section~\ref{sec-conclusion}.


\section{Preliminaries}
\label{sec-preli}

In this section, we define 
the problem of MP-PPRL and describe
the preliminaries required for our protocol.

\begin{defi}[\textbf{MP-PPRL}] 
Assume $P_{1}, \ldots, P_{p}$ are the $p$ owners (parties) of the \emph{deduplicated} databases $\mathbf{D}_{1}, \ldots, \mathbf{D}_{p}$, respectively. MP-PPRL allows the parties $P_i$ to determine which of their records $r_{i,x} \in \mathbf{D}_{i}$ match with records in other database(s) $r_{j,y} \in \mathbf{D}_{j}$ with $1 \le i,j \le p$ and $j \neq i$ based on the (masked or encoded) quasi-identifiers (QIDs) of these records. The output of this process is a set $\mathbf{M}$ of match clusters, where a match cluster $c \in \mathbf{M}$ contains 
a maximum of one record from each database and $1 < |c| \le p$. Each $c \in \mathbf{M}$ is identified as a set of matching records representing the same real-world entity.
The parties do not wish to reveal their actual records with any other party. They however are prepared to disclose to each other, or to an external party (such as a researcher), the actual values of some selected attributes of the record sets that are in $\mathbf{M}$ to allow further analysis. 
\end{defi}

We assume that the individual databases do not contain any duplicates (i.e. multiple records about the same patient). Each party performs the necessary pre-processing steps including deduplication to ensure the quality of their own database. Many deduplication techniques have been developed in the literature~\cite{Chr12,Nau10} which can be used for deduplicating individual databases before linking them across different parties (such that there is only one record per entity/patient in a database, and therefore a record in one database can match to only one record in another database).
%

We also assume that a private blocking, indexing, or filtering
technique is being used by the database
owners~\cite{Vat14c,Ran14,All05,Seh15}. Such techniques are being
used in general linkage and PPRL to reduce the number of comparisons
by grouping records according to a certain criteria and limiting
the comparison only to the records in the same
group~\cite{Chr12,Vat13}, or by pruning record pairs/sets that are
potential non-matches according to some criteria~\cite{Seh15}. Note
that blocking is not a focus of our paper, and that we assume that
the private blocking technique used by the database owners is
secure~\cite{Vat17b}.


Since QIDs that are generally used for linking (e.g. names and addresses) contain personal and sensitive information about individuals, PPRL needs to be conducted on the encoded or masked versions of these QIDs.
Any masking (encoding) function $mask(\cdot)$ can be used in our privacy-preserving linkage protocol to encode attribute values, as long as
the same $mask(\cdot)$ function is used by all database owners $P_i$
to mask their databases $\mathbf{D}_i$ into $\mathbf{D}_i^M$,
where $1 \le i \le p$.
%
We describe our protocol using the Bloom filter (BF) encoding technique, which is widely used in both research and practical applications of PPRL~\cite{Vat13,Ran13,Bro19}. We also provide an improved solution for privacy-preservation in the multi-party context using counting Bloom filter (CBF) encoding~\cite{Vat16b}.

%
\begin{defi}[\textbf{BF encoding}]
A BF $b_i$ is a bit vector of length $l$ bits
where all bits are initially set to $0$. $k$ independent hash
functions, $h_1, \ldots, h_k$, each with range $1, \ldots l$, are
used to map each of the elements $s$ in a set $S$ into the BF by
setting the bit positions $h_j(s)$ with $1 \le j \le k$ to $1$. 
\end{defi}

\begin{figure}[!t]
  \centering
  \scalebox{1.0}[0.9]{\includegraphics[width=0.47\textwidth]
                      {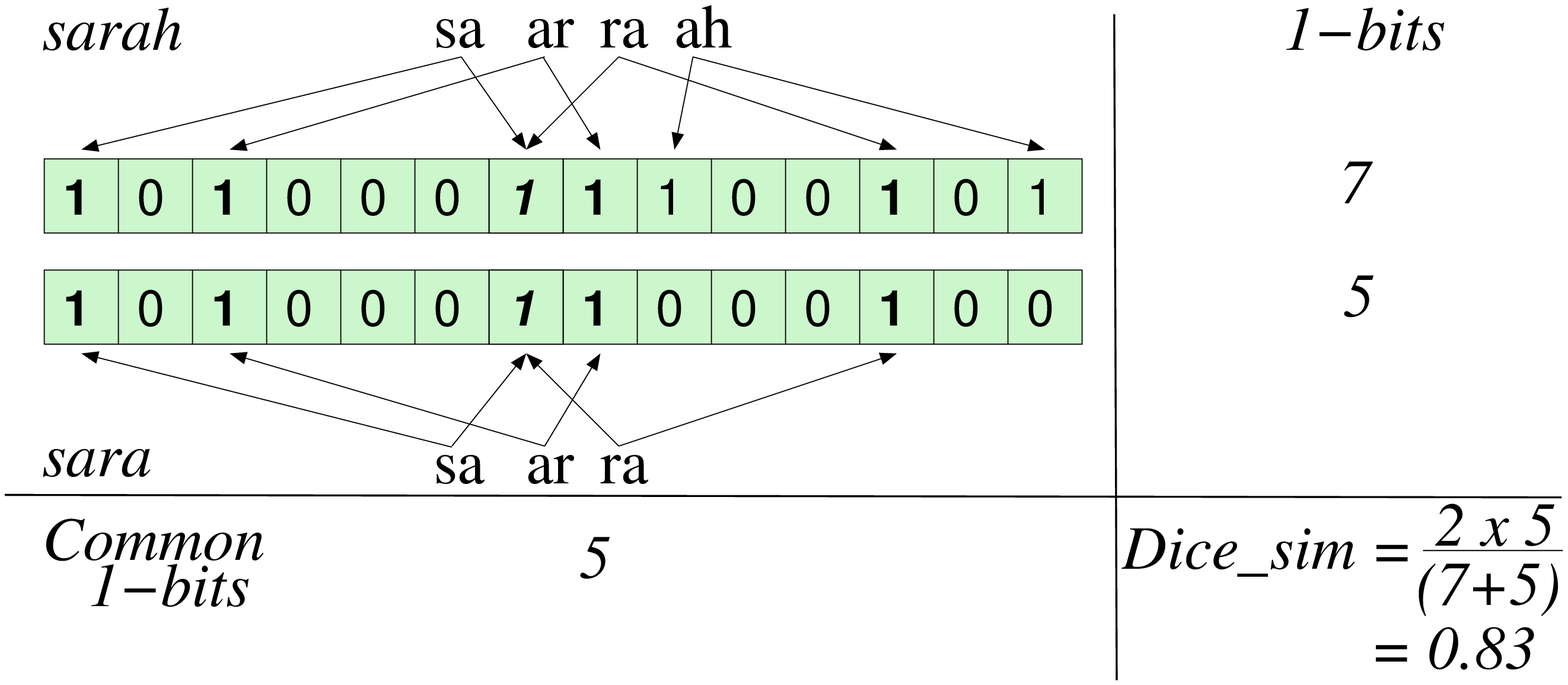}}
  \caption{\small{An example similarity (Dice coefficient) calculation of two strings masked using Bloom filter (BF) encoding,
                  where $l=14$, $k=2$, and $q=2$, as described in Section~\ref{sec-preli}.}}
           \label{fig:bloomfilter}
\end{figure}

For string matching, the $q$-grams (sub-strings of length
$q$) of QID values (that contain textual data, such as names and addresses)
of each record $r_{i,x}$ in the databases to be
linked $\mathbf{D}_i$, with $1 \le i \le p$, 
are hash-mapped into the BF $b_{i,x}$ using $k$ independent hash
functions~\cite{Sch15}.
Figure~\ref{fig:bloomfilter} illustrates the encoding of
bigrams ($q=2$) of two QID values `sarah' and `sara' into
$l=14$ bits long BFs using $k=2$ hash functions. The set of bigrams is first extracted from the string (e.g. \{'sa', 'ar', 'ra', 'ah'\} for 'sarah') and then each bigram in the set is hashed using $k=2$ hash functions to set the corresponding two indices in the BF to $1$ (e.g. hash-mapping bigram 'sa' results in setting the $1^{st}$ and $7^{th}$ bit positions to $1$). For numerical data, the neighbouring values (within a certain interval) of QID values are hash-mapped into the BF using $k$ hash functions~\cite{Vat16,Kara17}.
Collision of hash-mapping occurs (for example, the bigrams 'sa' and 'ra' are mapped to the same $7^{th}$ bit position in Figure~\ref{fig:bloomfilter}), which improves privacy of the encoding at the cost of loss in utility due to false positives.


In order to allow fuzzy/approximate matching of masked QIDs to perform record linkage in the presence of typographical errors and variations, the similarity/distance between the encoded values needs to be calculated~\cite{Chr12,Vat13}.
The similarity of records masked into BFs
can be calculated either distributively across all database
owners~\cite{Vat14c,Vat12} or by a linkage unit~\cite{Dur13,Sch15}. 
Any set-based similarity function
(such as overlap, Jaccard, and Dice coefficient)~\cite{Chr12} can be used to calculate the
similarity of pairs or sets (multiple) of BFs. In PPRL, the Dice coefficient has
been used for matching of BFs since it is insensitive to
many matching zeros (bit positions to which 
no elements are hash-mapped) in long BFs~\cite{Sch15}.

\begin{defi}[\textbf{Dice coefficient similarity}]
The Dice coefficient similarity of $p$ ($p \ge 2$) BFs ($b_1, \cdots,
b_p$) is:
\begin{eqnarray}
\label{eq:Dice_coefficient}
sim (b_1, \cdots, b_p) &=& \frac{p \times z}{\sum_{i=1}^{p} x_i}, 
\end{eqnarray}
where $z$ is the number of common bit positions that are set to $1$ in
all $p$ BFs (common $1$-bits), and $x_i$ is the number
of bit positions set to $1$ in $b_i$ ($1$-bits), $1 \le i \le p$.
\end{defi}

For the example Bloom filter pair shown in
Figure~\ref{fig:bloomfilter}, the number of common $1$-bits is $5$ and the number of $1$-bits in the two Bloom filters are $7$ and $5$, respectively, and therefore the Dice coefficient similarity is calculated as $2 \times 5 / (7 + 5) = 0.83$.

\begin{defi}[\textbf{CBF encoding}]
A counting Bloom filter (CBF) $c$ is an integer vector of length $l$ bits that contains the counts of values in each bit position. Multiple BFs can be summarized into a single CBF $c$, such that $c[\beta] = \sum_{i=1}^p b_i[\beta]$, where $\beta, 1 \le \beta \le l$.  
$c[\beta]$ is the count value in the $\beta$ bit position of the CBF $c$ and
$b_i[\beta] \in[0,1]$ provides the value in the bit position $\beta$ of BF $b_i$.
Given $p$ BFs (bit vectors) $b_i$ with $1 \le i \le p$, the CBF $c$ can be generated by applying a vector addition operation between the bit vectors such that $c = \sum_i b_i$.
\end{defi}

\begin{theorem}
The Dice coefficient similarity of $p$ BFs can be calculated given only their corresponding CBF as:
\begin{eqnarray}
\label{eq:Dice_coefficient_cbf} 
sim (c) = \frac{p \times |\{\beta: c[\beta] = p, 1 \le \beta \le l\}|}{\sum_{\beta=1}^{l} c[\beta]} \nonumber
\\
\end{eqnarray}
\end{theorem}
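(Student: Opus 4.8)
The plan is to show that the claimed expression for $sim(c)$ agrees with the original Dice formula $sim(b_1, \ldots, b_p) = p z / \sum_{i=1}^{p} x_i$ by matching numerator and denominator separately, using only the defining relation $c[\beta] = \sum_{i=1}^{p} b_i[\beta]$ and the fact that each $b_i[\beta] \in \{0,1\}$.

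First I would handle the numerator, i.e.\ the claim that $z = |\{\beta : c[\beta] = p,\ 1 \le \beta \le l\}|$. Since each term $b_i[\beta]$ is binary, the count $c[\beta] = \sum_{i=1}^{p} b_i[\beta]$ is an integer in the range $[0,p]$, and it attains the value $p$ if and only if every summand equals $1$, that is, exactly when bit position $\beta$ is set to $1$ in all $p$ Bloom filters simultaneously. By the definition of $z$ as the number of common $1$-bits, the set of such positions has cardinality $z$, giving $z = |\{\beta : c[\beta] = p\}|$. This binary-valued characterization is the one genuinely load-bearing observation in the proof.

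Next I would handle the denominator by interchanging the order of summation. Writing $x_i$, the number of $1$-bits in $b_i$, as $x_i = \sum_{\beta=1}^{l} b_i[\beta]$, I would compute $\sum_{i=1}^{p} x_i = \sum_{i=1}^{p} \sum_{\beta=1}^{l} b_i[\beta] = \sum_{\beta=1}^{l} \sum_{i=1}^{p} b_i[\beta] = \sum_{\beta=1}^{l} c[\beta]$, where the final equality substitutes the definition of the CBF. The swap is justified because the double sum is finite.

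Substituting both identities into the Dice formula then yields $sim(b_1, \ldots, b_p) = p z / \sum_{i=1}^{p} x_i = p\,|\{\beta : c[\beta] = p\}| / \sum_{\beta=1}^{l} c[\beta]$, which is precisely the stated expression for $sim(c)$, completing the argument. I do not expect a real obstacle here: the result is an exact algebraic identity rather than an approximation, and the only subtlety worth flagging explicitly is that the equivalence depends critically on the $b_i$ being genuine bit vectors (so that $c[\beta]=p$ is both necessary and sufficient for a common $1$-bit); for integer-valued inputs the characterization of the numerator would fail.
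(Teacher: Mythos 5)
Your proof is correct and follows essentially the same route as the paper's: matching the denominator by interchanging the order of summation and the numerator by observing that $c[\beta]=p$ characterizes the common $1$-bits. If anything, you are slightly more careful than the paper, which only states the ``if'' direction (all bits set implies $c[\beta]=p$) whereas you explicitly justify the converse via the binary-valuedness of the $b_i[\beta]$, which is the direction actually needed to rule out overcounting.
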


\begin{proof}
The Dice coefficient similarity of $p$ BFs ($b_1$,$b_2$, $\cdots$, $b_p$) is determined by the sum of $1$-bits ($\sum_{i=1}^p x_i$) in the denominator of Eq.~(\ref{eq:Dice_coefficient}) and the number of common $1$-bits ($z$) in all $p$ BFs in the nominator of Eq.~(\ref{eq:Dice_coefficient}).
The number of $1$-bits in a BF $b_i$ is $x_i = b_i[1] + b_i[2] + \cdots + b_i[l]$, with $1 \le i \le p$. The sum of $1$-bits in all $p$ BFs is therefore $\sum_{i=1}^p x_i = \sum_{i=1}^p b_i[1] + b_i[2] + \cdots + b_i[l]$. The value in a bit position $\beta$ ($1 \le \beta \le l$) of the CBF of these $p$ BFs is $c[\beta] = b_1[\beta] + b_2[\beta] + \cdots + b_p[\beta]$. The sum of values in all bit positions of the CBF is $\sum_{\beta=1}^{l} c[\beta] = \sum_{\beta=1}^{l} b_1[\beta] + b_2[\beta] + \cdots + b_p[\beta]$ which is equal to $\sum_{i=1}^p x_i = \sum_{i=1}^p b_i[1] + b_i[2] + \cdots + b_i[l]$. 
Further, if a bit position $\beta$ ($1 \le \beta \le l$) contains $1$ in all $p$ BFs, i.e.\ $\forall_{i=1}^p b_i[\beta] = 1$, then $c[\beta] = \sum_{i=1}^p b_i[\beta] = p$. Therefore, the common $1$-bits ($z$) that occur in all $p$ BFs can be calculated by counting the number of positions $\beta \in c$ where $c[\beta] = p$, while the sum of the number of $1$-bits ($\sum_{i=1}^p x_i$) is calculated by summing the values in all bit positions $\beta \in c$, $\sum_{\beta=1}^l c[\beta]$.
\end{proof}

\begin{figure}[!t]
  \centering
  \scalebox{1.0}[1.0]{\includegraphics[width=0.47\textwidth]
                      {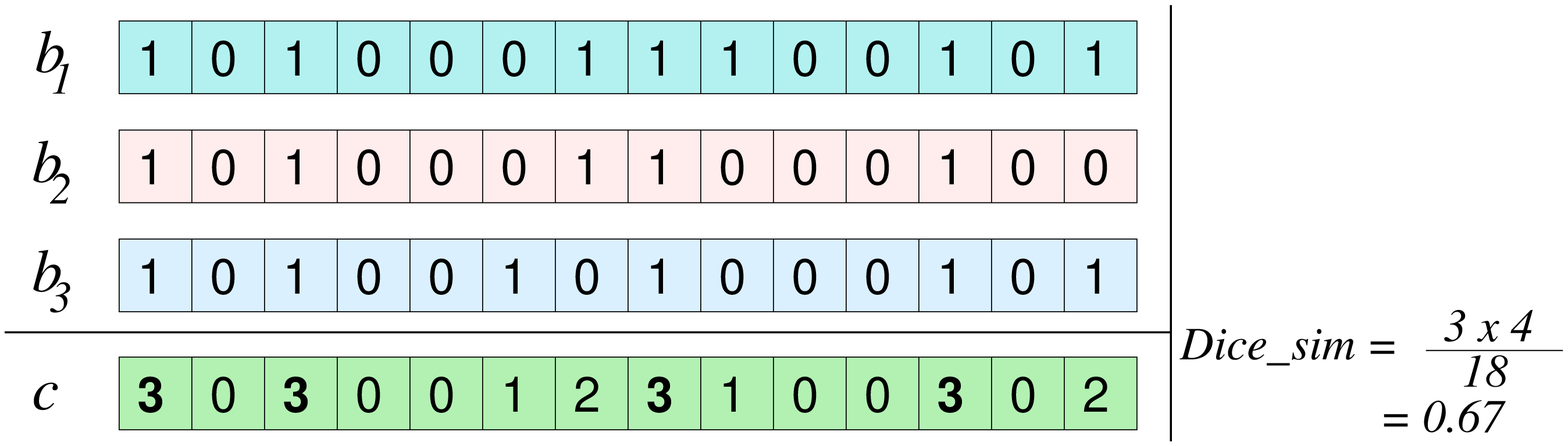}}
  \caption{\small{An example similarity (Dice coefficient) calculation of three BFs using their CBF, as described in Section~\ref{sec-preli}.}}
           \label{fig:cbloomfilter}
\end{figure}

Figure~\ref{fig:cbloomfilter} shows an example of using CBF to calculate the similarity of $p=3$ BFs ($b_1$, $b_2$, and $b_3$). The CBF $c$ contains the aggregated counts from the three BFs. The number of common $1$-bits in all three BFs is $4$ because $4$ indices in $c$ contain the count of $3$, and the total number of $1$-bits in all three BFs is $18$, which is the sum of the counts in $c$. Hence, the Dice coefficient similarity is calculated as $3 \times 4 / 18 = 0.67$.
As will be described in Sections~\ref{sec-privacy_improvement} and~\ref{subsec_privacy_analsis}, CBFs provide improved privacy compared to BFs in a multi-party context~\cite{Vat16b}.




\section{MP-PPRL Protocol}
\label{sec-protocol}

Our protocol allows the efficient
identification of matching records from 
several (two or more) databases
held by different parties. 
We use an incremental graph-based clustering
approach to achieve
efficient linking of multiple large databases by reducing the exponential comparison space required by traditional linkage methods~\cite{Vat13,Chr11}. The explosion in the number of record pair comparisons required with increasing number of large databases necessitates a transition from batch to incremental clustering methods, which process one database at a time and typically store only a small subset of the data as potential matching records~\cite{Ack14}.
\smallskip

\textbf{Overview: } Masked/encoded database records are represented by the vertices in a graph and the similarities between compared records are represented by the edges. As we describe below, the databases are ordered using an \emph{ordering function} to determine in which order the databases are to be processed for incremental clustering. The aim of incremental clustering is to incrementally cluster/group vertices such that similar records from different databases are grouped into one cluster. Vertices containing similar records are identified by using a \emph{similarity function}. As we describe in Sections~\ref{subsec:early_mapping} and~\ref{subsec:late_mapping}, we propose two \emph{mapping functions} that perform clustering by merging and/or splitting vertices in the graph. The final output of our protocol is a cluster graph whose vertices are clusters containing similar records, or vertices containing a single record that is not matched with any other records. Each cluster/vertex in the final cluster graph corresponds to one real-world entity. Records within each cluster can be linked as matches and used for further analysis. In the following we describe our protocol in detail.


\begin{defi}[\textbf{Cluster graph}]
\label{def:graph}
A cluster graph $\mathbf{G}$ is a $p$-partite graph that contains a set $\mathbf{S}$ of non-empty independent sets $V_i$ with $1 \le i \le p$ containing vertices/nodes, and a set $E$ of unordered pairs of vertices each representing an undirected edge between a pair of vertices $v_x$ and $v_y$ such that $v_x \in V_i$ and $v_y \in V_j$ with $i \neq j$. 
The vertex $v$ can be considered as a \textbf{cluster}
containing either a single masked record (singleton) or a set of
masked records after merging vertices. 
An edge $e \in E$ 
represents the similarity $sim(v_x,v_y)$ between masked records in 
the two vertices $v_x$ and $v_y$. 
\end{defi}


Following Definition~\ref{def:graph}, the records from all databases $\mathbf{D}_1, \cdots, \mathbf{D}_p$ are
represented as vertices in a cluster graph $\mathbf{G}$, 
and they are incrementally clustered  
such that at the end of our protocol 
each cluster contains a set of matching records from different parties.
During  incremental clustering we have to assign records of a newly considered party 
to the already determined clusters of previously matched parties. 
In general, new records might be similar to several such clusters so that there 
is a many-to-many match relationship between the set X of already existing clusters
and the set Y of new records as shown in Figure~\ref{fig:hungarian} (left-hand side).
Our goal, however, is to 
identify the best one-to-one mapping for such matches (Figure~\ref{fig:hungarian} (right-hand side)) 
since the databases are assumed to be deduplicated, and therefore only one-to-one true mapping can exist between records from different databases. 

\begin{figure}[t!]
\centering
\scalebox{0.9}[0.9]{\includegraphics[width=0.45\textwidth]{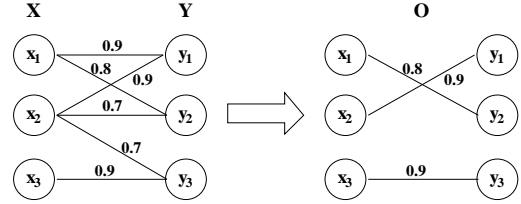}}
\caption{\small{An example of optimal one-to-one mapping (defined in Section~\ref{sec-protocol}) using the Hungarian algorithm~\cite{Kuh55}.}
}
\label{fig:hungarian}
\end{figure}

Such one-to-one mappings between the vertices in $\mathbf{G}$
can be determined by either (1) a greedy approach
or (2) an optimal mapping approach that ensures that each record (vertex)
is matched with only the best matching record/records 
from other parties.
Given two lists of (unassigned) vertices $X$ and $Y$,
the greedy approach scans through the vertices in $X$ and
assigns them to the best matching vertex in $Y$
that is not yet 
assigned to any other vertex
according to their similarity.
The greedy approach is not optimal, because when assigning a vertex $x \in X$ to a vertex in $Y$ only the
similarities between $x$ and unassigned vertices in $Y$ are considered
while neglecting the similarities of the other vertices in $X$ with vertices in $Y$. Moreover, similar to the best link grouping method proposed by Kendrick~\cite{Ken98} (as described in Section~\ref{sec-related}), greedy mapping depends on the ordering of the vertices/nodes as they are processed.

In our protocol, we therefore use the optimal mapping
approach using the Hungarian algorithm~\cite{Kuh55}, 
which is a combinatorial
algorithm for solving the optimal assignment problem in polynomial time.
Given two sets of vertices, $X$ and $Y$,
the algorithm determines the optimal one-to-one mapping 
by assigning a vertex in $X$ to a maximum of one vertex in $Y$ such that the overall similarity
between all assigned vertices is maximized:  
 
\begin{defi}[\textbf{Optimal mapping}]
\label{def:optmap}
Given two sets of vertices, $X$ and $Y$, 
along with a similarity function
$sim(x_i \in X, y_j \in Y)$. Identify a bijection $O: X \rightarrow Y$ such that
\begin{eqnarray}
\label{eq:cost_func}
\sum_{x_i,y_j \in O} sim(x_i, y_j)
\end{eqnarray}
is maximized.
\end{defi}

An illustrative example of optimal one-to-one mapping
is shown in Figure~\ref{fig:hungarian}.
For example, $x_1$ has the highest similarity with $y_1$ of $sim(x_1,y_1)=0.9$ while with $y_2$ the similarity is $sim(x_1,y_2) = 0.8$. With greedy mapping (assuming the order of processing as $x_1$ first and then $x_2$ followed by $x_3$), $x_1$ is mapped to $y_1$ and therefore $x_2$ needs to be mapped to $y_2$, and $x_3$ with $y_3$. This gives a total summed similarity of $2.5$ (Eq.~\ref{eq:cost_func}). However, with the optimal one-to-one mapping, $x_1$ is mapped with $y_2$ and $x_2$ with $y_1$ while $x_3$ is still mapped to $y_3$, resulting in a total similarity value of $2.6$ (which is better than the greedy mapping).
If $|X| \neq |Y|$, $abs(|X|-|Y|)$ vertices remain not mapped
to any vertices after one-to-one mapping is applied.

The proof of Kuhn-Munkres theorem states that for any matching $O$ and any feasible labelling $O'$ (such as greedy mapping), it holds~\cite{Kuh55}
\begin{eqnarray}
\label{eq:munkres}
w(O) = \sum_{e \in O} w(e) >= \sum_{e' \in O'} w(e'),
\end{eqnarray}
where $w(\cdot)$ denotes the weight function of an edge.
Therefore, $O$ is the optimal mapping in terms of maximizing edge weights (similarities in our context).
We will experimentally evaluate the greedy as well as the optimal mapping approaches in Section~\ref{sec-experiment}. 
%
%

We three initial steps of our MP-PPRL protocol are: 

\begin{figure*}[t!]
\centering
\scalebox{0.9}[0.85]{\includegraphics[width=0.98\textwidth]{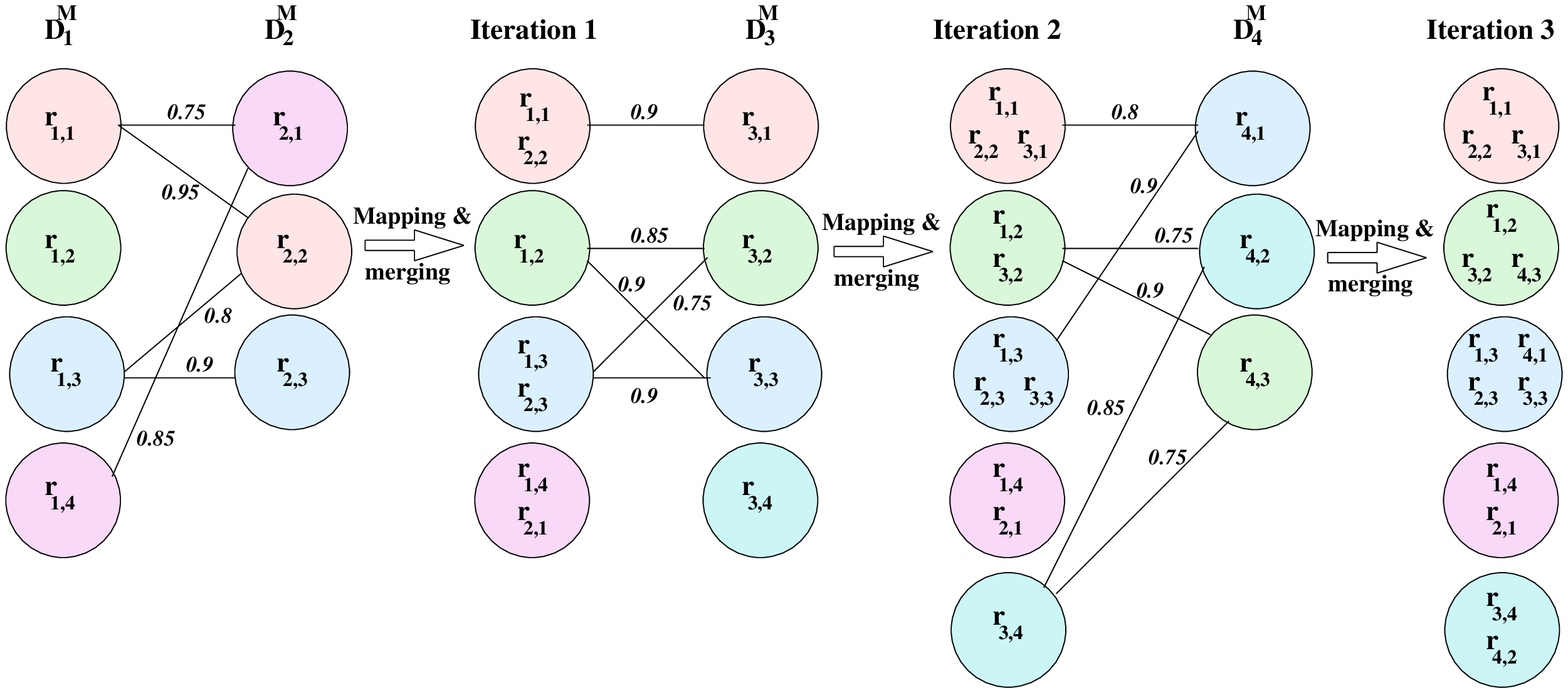}}
\caption{\small{An example of early mapping-based incremental clustering, as described in Section~\ref{subsec:early_mapping}. Edges represent a similarity value between vertices of at least the similarity threshold $s_t$ ($s_t = 0.75$ in this example). 
The similarity values shown here are made-up example values.
Different colors represent different final clusters and how they are iteratively mapped and merged.}
}
\label{fig:clus_mpprl}
\end{figure*}


\begin{enumerate}[(1)]

\item All database owners mask (encode) their database records using the same
masking function $mask(\cdot)$.
This can, for example, be BF encoding,  
as described in Section~\ref{sec-preli}.

\item To reduce the comparison space, a blocking function 
$block(\cdot)$ is applied on the
database records (individually by the database owners) 
to group similar records into the same block according to some
criteria (known as blocking key)~\cite{Chr12,Ran14}. 
All records that have the same
(or a similar) blocking key
value (BKV) are grouped into the same block.
For example, phonetic-based 
or multi-bit tree-based blocking can be used as the
$block(\cdot)$ function~\cite{Chr12,Ran14,Sch15}.

\item The masked records ($\mathbf{D}_i^M$) along with their blocks ($\mathbf{B}_i$)
are sent to a linkage unit ($LU$) to conduct the linkage of
these masked records using 
the graph-based incremental clustering approach.
At the $LU$, the records are processed block by block
(i.e. each block $B \in \mathbf{B}$ 
is considered as one graph $\mathbf{G}_{B}$,
where $\mathbf{B}$ contains the union of all $\mathbf{B}_i$, with $1 \le i \le p$).

\end{enumerate}


We propose two different
methods for incremental clustering in the graphs $\mathbf{G}_B$: 
(1) early mapping and (2) late mapping.
We first present the steps involved in 
the incremental clustering
approach with
early mapping in Section~\ref{subsec:early_mapping} 
and then the late
mapping-based approach
in Section~\ref{subsec:late_mapping}. While both approaches incrementally merge records from different parties,
they differ in when they apply the one-to-one mapping restriction. 
With early mapping this restriction is continually observed  
such that every record is only assigned to a single cluster and
the number of records per cluster never exceeds the number of parties. 
By contrast, late mapping assigns records to all clusters for which a minimum 
similarity is exceeded so that there may temporarily be overlapping clusters and clusters with several records from the
same party. The one-to-one restriction is then enforced at the end
of the algorithm in a separate mapping phase.  
Both approaches have a 
trade-off between complexity and linkage quality,
as we will discuss in Section~\ref{sec-analysis}.

As will be detailed in the following two sections,
the inputs to the incremental clustering algorithm are:
$p$ masked databases $\mathbf{D}_i^M$ (with $1 \le i \le p$), 
the union of blocks from all parties $\mathbf{B} = \cup_i \mathbf{B}_i$, a similarity
function $sim(\cdot)$ for calculating similarities between
vertices in $\mathbf{G}_{B}$, an ordering function $ord(\cdot)$
for ordering the databases to be processed, a mapping function $map(\cdot)$
for one-to-one mapping between vertices in $\mathbf{G}_{B}$ (early mapping, late mapping, or the na\"ive greedy mapping), 
a minimum similarity threshold $s_t$ to connect two vertices in $\mathbf{G}_{B}$
by an edge (if their similarity is at least $s_t$), and the minimum subset size $s_m$ ($s_m \le p$), i.e.\ the minimum number of records that each 
final cluster must contain.

The databases need to be ordered using the $ord(\cdot)$ function for incremental clustering.
The ordering can be
either (a) random, (b) according to their
sizes in descending order so that a smaller number of
merging will be required,
or (c) depending on their data quality of the respective
databases 
in descending order so that the initial clusters will be of higher quality leading to higher linkage quality~\cite{Nen18}.

\subsection{Early mapping-based clustering} 
\label{subsec:early_mapping}

The early mapping-based clustering 
incrementally adds records in each database
to the corresponding vertices in the graph by
identifying the one-to-one mapping between vertices
and records and then merging them.
To achieve the one-to-one mapping we  apply the Hungarian algorithm~\cite{Kuh55} according to Definition~\ref{def:optmap} ensuring that a record from one database
is matched to a maximum of one cluster of previously matched records and that  clusters in the graph are non-overlapping 
(i.e.\ $\forall (v_i,v_j) \in \mathbf{G}_B ~:~ v_i \cap v_j = \emptyset$).


\smallskip
Selecting the optimal cluster to which a record should be added is based on the similarities between  two vertices of the cluster graph and a minimum
similarity threshold $s_t$. In other words, two vertices $v_i,v_j \in \mathbf{G}_B$
are only merged into one if $sim(v_i,v_j) \ge s_t$.
The similarity between two singletons can easily
be calculated using a similarity function,
for example the Dice coefficient similarity, to
compare the singletons containing records
masked into BFs (as described in Section~\ref{sec-preli}).
%

The similarity between a cluster $c$ that contains more than one record and a singleton $v$ consisting of a single
masked record can be calculated in several ways, including maximum similarity (single linkage), minimum similarity (complete linkage), or average similarity. We use the average similarity function in this work 
in order to consider data errors and variations, as well as possible variations of the masking function (such as Bloom filter collisions~\cite{Sch09}) while not compromising computational efficiency. We leave studying other similarity functions for our incremental clustering as a future work.

\begin{defi}[\textbf{Average similarity}]
The average similarity between a cluster $c$ and a 
(masked) record $r_{i,x}$ (in a singleton) is 
\begin{small}
\begin{eqnarray}
\label{eq:avg_sim}
sim_{avg}(c,r_{i,x}) &=& \frac{\sum_{r_{j,y} \in c} sim(r_{j,y},r_{i,x})}{|c|},
\end{eqnarray}
\end{small}
with $1 \le i,j \le p$, $i \neq j$, and $|c| \ge 1$.
\end{defi}

The early mapping-based approach involves $p-1$ iterations to 
perform one-to-one mapping and merging
between records from $p$ parties.
An overview of our clustering approach 
with early one-to-one mapping is illustrated for linking $p=4$ databases ($p-1=3$ iterations) in Figure~\ref{fig:clus_mpprl}
and outlined in Algorithm~1. 
The steps of our protocol with early mapping-based
clustering are (continuing after the initial steps (1) to (3)):

\begin{enumerate}[(1)]

\setcounter{enumi}{3}

\item The $LU$ conducts linkage of masked records
in databases $\mathbf{D}_1^M, \cdots, \mathbf{D}_p^M$ 
from $p$ parties.
These databases are ordered (using the $ord(\cdot)$ function in line~2 in 
Algorithm~1) for incremental clustering.
%
For each block $B \in \mathbf{B}$, 
the masked records in $B$ of the first party
$P_1$ are
added into a graph $\mathbf{G}_{B}$ as separate vertices 
(lines~3 to 9 in Algorithm~1).
Then the second party $P_2$'s 
masked records are inserted
into $\mathbf{G}_{B}$ as separate vertices and the similarities between vertices
of the first party and the second party are calculated (lines~10 to 13). 
If the similarity
between two vertices
is at least a minimum threshold $s_t$ an edge is created between the
corresponding vertices (as shown in Figure~\ref{fig:clus_mpprl} for $s_t=0.75$ and
described in lines~14 and 15 in Algorithm~1).

\item The optimal
one-to-one mapping (as defined in Section~\ref{sec-preli}) 
is applied
in every iteration $i$ (with $1 \le i \le p-1$)
after edges between the records from 
party $P_{i+1}$ (singletons)
and clusters of records from parties $P_1$ to $P_{i}$ have been added.
This optimal mapping connects only two highly matching
vertices, complying with the assumption of
deduplication. 
All the edges $e \in \mathbf{G}_{B}.E$ 
that are not matching after the
optimal mapping are removed from $\mathbf{G}_{B}$ 
(lines~16 to 19). 


\begin{table}[t]
\scriptsize\addtolength{\tabcolsep}{-3pt}
\begin{tabular}{lll} 
  \label{algo_early_map}
    ~ \\[0.5mm] \hline 
    \\[-2mm]
    \multicolumn{3}{l}{\textbf{Algorithm~1:} Early mapping-based incremental clustering (Section~\ref{subsec:early_mapping})}
      \\[0.5mm] \hline
    ~ \\[-2mm]
    \multicolumn{3}{l}{\textbf{Input:}} \\
    {- $\mathbf{D}^M_i$} & \multicolumn{2}{l}{: Party $P_i$'s BFs along with their BKVs, $1 \le i \le p$} \\
    {- $\mathbf{B}$} & \multicolumn{2}{l}{: Blocks containing the union of blocks from all parties} \\
    {- $sim$} & \multicolumn{2}{l}{: Similarity function} \\
    {- $ord$} & \multicolumn{2}{l}{: Ordering function for incremental processing of databases} \\
    {- $map$} & \multicolumn{2}{l}{: One-to-one mapping function} \\
    {- $s_t$} & \multicolumn{2}{l}{: Minimum similarity threshold to classify record sets} \\
    {- $s_m$} & \multicolumn{2}{l}{: Minimum subset size, with $2 \le s_m \le p$} \\
    \multicolumn{3}{l}{\textbf{Output:}} \\
    {- $\mathbf{M}$} & \multicolumn{2}{l}{: Matching record sets (clusters)} \\[1mm]

    1:& $clus\_ID = 0; \mathbf{G} = \{\}; \mathbf{M} = \{\}$ & // Initialization \\
    2:& $DBs = ord([\mathbf{D}^M_1,\mathbf{D}^M_2, \cdots, \mathbf{D}^M_p])$ & // Order databases \\
    3:& \textbf{for} $B \in \mathbf{B}$ \textbf{do}: & // Iterate blocks \\
    4:& \hspace{2mm} $\mathbf{G}_B = \{\}$ & // Graph for block $B$ \\
    5:& \hspace{2mm} \textbf{for} $i \in [1, 2, \cdots p]$ \textbf{do}:& // Iterate parties \\
    6:& \hspace{4mm} \textbf{if} $i == 1$ \textbf{do}: & // First party\\    
    7:& \hspace{6mm} \textbf{for} $rec \in DBs[i]$ \textbf{do}: & // Iterate records \\
    8:& \hspace{8mm} $clus\_ID += 1$ & ~ \\
    9:& \hspace{8mm} $\mathbf{G}_{B}[clus\_ID] = [DBs[i][rec]]$ & // Add vertices \\
    10:& \hspace{4mm} \textbf{if} $i > 1$ \textbf{do}: & // Other parties\\ 
    11:& \hspace{6mm} \textbf{for} $rec \in DBs[i]$ \textbf{do}: & // Iterate records \\
    12:& \hspace{8mm} \textbf{for} $c \in \mathbf{G}_B$ \textbf{do}: & // Iterate vertices \\
    13:& \hspace{10mm} $sim\_val = sim(rec,c)$ & // Calculate similarity \\
    14:& \hspace{10mm} \textbf{if} $sim\_val \ge s_t$ \textbf{then}:&~\\
    15:& \hspace{12mm} $\mathbf{G}_B.add\_edge(c,rec)$ & // Add edges \\
    16:& \hspace{6mm} $opt\_E = map(\mathbf{G}_B.E)$ & // 1-to-1 mapping \\
    17:& \hspace{6mm} \textbf{for} $e \in \mathbf{G}_B.E$ \textbf{do}: & // Iterate edges \\
    18:& \hspace{8mm} \textbf{if} $e \notin opt\_E$ \textbf{then}: & ~ \\
    19:& \hspace{10mm} $\mathbf{G}_B.remove(e)$ & // Prune edges\\ 
    20:& \hspace{4mm} \textbf{for} $e \in \mathbf{G}_B.E$ \textbf{do}: & // Remaining edges \\
    21:& \hspace{6mm} $\mathbf{G}_B.merge(get\_vertices(e))$ & // Merge cluster vertices \\
    22:& \hspace{2mm} $\mathbf{G}.add(\mathbf{G}_B)$ & // Add $B$'s clusters to $\mathbf{G}$\\ 
    23:& \hspace{0mm} \textbf{for} $c \in \mathbf{G}$ \textbf{do}: & // Iterate final clusters\\
    24:& \hspace{2mm} \textbf{if} $|c| \ge s_m$ \textbf{then}: & // Size at least $s_m$ \\
    25:& \hspace{4mm} $\mathbf{M}.add(c)$ & // Add to $\mathbf{M}$\\ 
    26:& return $\mathbf{M}$ & // Output $\mathbf{M}$ \\ 
      \hline 
  \end{tabular}
\end{table}

\item The vertices that are connected by an edge are then merged
into one (lines~20 and 21), while the vertices that do not have
any connecting edge (those that are not matching to any vertices
in the other databases) are kept as unclustered vertices.

In our running example shown in Figure~\ref{fig:clus_mpprl}, 
the optimal mapping (according to the objective function~\ref{eq:cost_func}) between records of parties $P_1$ and $P_2$ (based on the similarity values) in the first iteration leads to their respective
records $r_{1,1}$ and $r_{2,2}$
to be merged into a single cluster, while $r_{1,2}$ from $P_1$
is not clustered
with any vertices from $P_2$. Similarly, $r_{1,3}$ and $r_{2,3}$, and
$r_{1,4}$ and $r_{2,1}$ are merged into clusters.

\item The $LU$ then proceeds with the masked records in $B$ of the
next (third) party which are first inserted into $\mathbf{G}_B$ as separate
vertices (singletons). Then the similarities between these vertices and the
clustered vertices and singletons from the previous parties' masked
records are calculated and an edge is created connecting those vertices
that have a similarity above the minimum threshold $s_t$ (lines~10 to 15
in Algorithm~1).
An optimal one-to-one mapping is then applied again between
the vertices from all previous parties and the new 
singleton vertices of the current party
in lines~16 to 19.
For example in Figure~\ref{fig:clus_mpprl}, in iteration 2 the 
singleton vertex with record $r_{3,3}$ of the current party $P_3$
and the clustered vertex containing records $(r_{1,3},r_{2,3})$ of
previous parties $P_1$ and $P_2$, respectively,
are merged. Similarly, $r_{3,1}$ is merged with the cluster containing $r_{1,1}$ and $r_{2,2}$ from previous parties, while $r_{3,2}$ is merged with $r_{1,2}$, as this gives the optimal mapping (according to Equation~\ref{eq:cost_func}).


\item The vertices connected by an edge after one-to-one mapping
(highly matching vertices) at an iteration 
are merged into one, while the vertices (both clustered and  
singletons) that are not matching to any other vertices
remain as unclustered vertices (lines~20 and 21). For example, the 
vertex with record $r_{3,4}$ of party $P_3$
and the clustered vertex containing records $(r_{1,4},r_{2,1})$ of
parties $P_1$ and $P_2$, respectively,
are not merged with any other vertices in iteration 2, as shown in Figure~\ref{fig:clus_mpprl}.

\item This process of mapping and merging of vertices is repeated
until the masked records of all parties are processed 
(i.e. $p-1$ iterations for each block).
The output will be clusters
(final vertices in graph $\mathbf{G}_B$) that either have records from all $p$ parties,
or a subset of $p$ parties, or only one record from a single party.
The final clusters of block $B$ (i.e.\ vertices in graph $\mathbf{G}_B$)
are added to $\mathbf{G}$ (line~22).
Based on the minimum subset size $s_m$ required by the MP-PPRL protocol, 
all the vertices that have a size
of at least $s_m$ 
(i.e.\ vertices containing matching records from at least $s_m$ parties) 
are added to the final matching set of records $\mathbf{M}$ (lines~23 to 26). 
For example, if $s_m = 3$ in our running example, then $\mathbf{M}$ will contain only three clusters which are $(r_{1,1},r_{2,2},r_{3,1})$, $(r_{1,2},r_{3,2},r_{4,3})$, and $(r_{1,3},r_{2,3},r_{3,3},r_{4,1})$.

\end{enumerate}

\begin{figure*}[t!]
\centering
\scalebox{1.0}[1.02]{\includegraphics[width=0.99\textwidth]{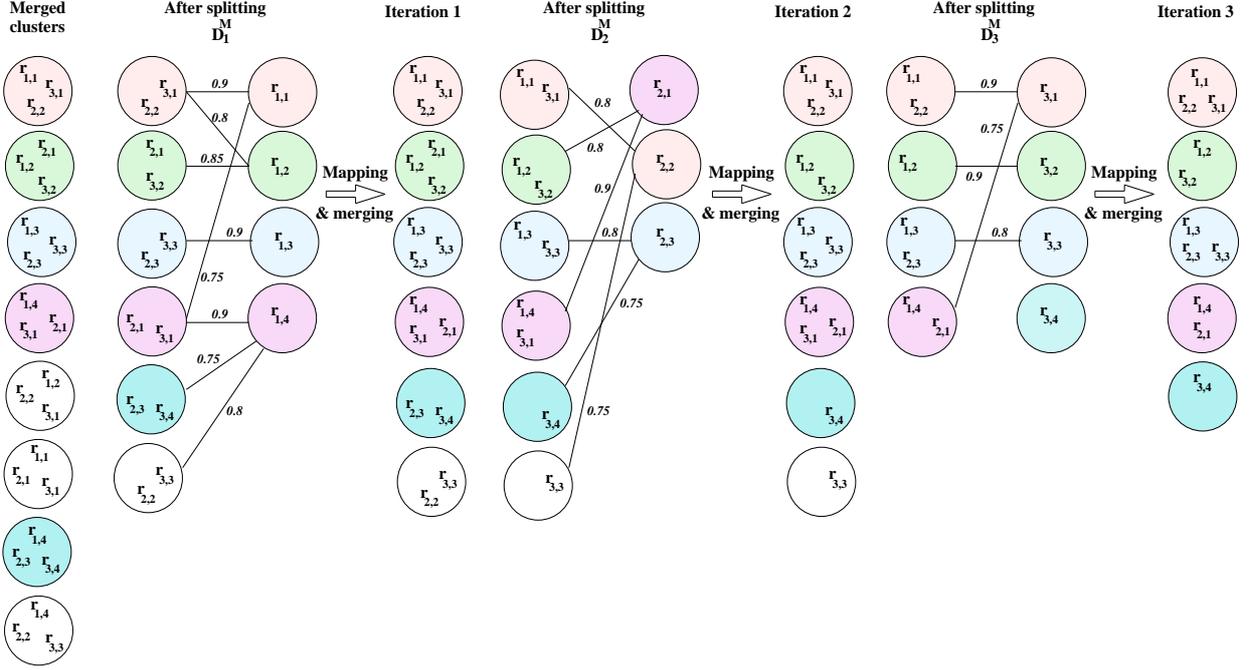}}
\caption{\small{An example of the splitting and one-to-one mapping phases of late mapping-based incremental clustering,
as described in Section~\ref{subsec:late_mapping}. The similarity values shown here are
made-up example values (with the similarity threshold $s_t = 0.75$).
Different colors represent different final clusters and how they are iteratively split from merged clusters and mapped.}
}
\label{fig:clus_mpprl2}
\end{figure*}


\begin{table}[!t]
\scriptsize\addtolength{\tabcolsep}{-3pt}
\begin{tabular*}{0.47\textwidth}{c @{\extracolsep{\fill}} lll}
  \label{algo_late_map}
    ~ \\[0.5mm] \hline 
    \\[-2mm]
    \multicolumn{3}{l}{\textbf{Algorithm~2:} Late mapping-based incremental clustering (Section~\ref{subsec:late_mapping})}
      \\[0.5mm] \hline
    ~ \\[-2mm]
    \multicolumn{3}{l}{\textbf{Input:}} \\
    {- $\mathbf{D}^M_i$} & \multicolumn{2}{l}{: Party $P_i$'s BFs along with their BKVs, $1 \le i \le p$} \\
    {- $\mathbf{B}$} & \multicolumn{2}{l}{: Blocks containing the union of blocks from all parties} \\
    {- $sim$} & \multicolumn{2}{l}{: Similarity function} \\
    {- $ord$} & \multicolumn{2}{l}{: Ordering function for incremental processing of databases} \\
    {- $map$} & \multicolumn{2}{l}{: One-to-one mapping function} \\
    {- $s_t$} & \multicolumn{2}{l}{: Minimum similarity threshold to classify record sets} \\
    {- $s_m$} & \multicolumn{2}{l}{: Minimum subset size, with $2 \le s_m \le p$} \\
    \multicolumn{3}{l}{\textbf{Output:}} \\
    {- $\mathbf{M}$} & \multicolumn{2}{l}{: Matching record sets (clusters)} \\[1mm]

    1:& $clus\_ID = 0; \mathbf{G} = \{\}; \mathbf{M} = \{\}$ & // Initialization \\
    2:& \textbf{for} $B \in \mathbf{B}$ \textbf{do}: & // Iterate blocks \\
    3:& \hspace{2mm} $\mathbf{G}_B = \{\}$ & // Graph for block $B$ \\
    4:& \hspace{2mm} \textbf{for} $i \in [1, 2, \cdots p]$ \textbf{do}:& // Iterate parties \\
    5:& \hspace{4mm} \textbf{if} $i == 1$ \textbf{do}: & // First party\\    
    6:& \hspace{6mm} \textbf{for} $rec \in \mathbf{D}^M_i$ \textbf{do}: & // Iterate records \\
    7:& \hspace{8mm} $clus\_ID += 1$ & ~ \\
    8:& \hspace{8mm} $\mathbf{G}_{B}[clus\_ID] = [\mathbf{D}^M_i[rec]]$ & // Add vertices \\
    9:& \hspace{4mm} \textbf{if} $i > 1$ \textbf{do}: & // Other parties\\ 
    10:& \hspace{6mm} \textbf{for} $rec \in \mathbf{D}^M_i$ \textbf{do}: & // Iterate records \\
    11:& \hspace{8mm} \textbf{for} $c \in \mathbf{G}_B$ \textbf{do}: & // Iterate vertices \\
    12:& \hspace{10mm} $sim\_val = sim(rec,c)$ & // Calculate similarity \\
    13:& \hspace{10mm} \textbf{if} $sim\_val \ge s_t$ \textbf{then}:&~\\
    14:& \hspace{12mm} $\mathbf{G}_B.add\_edge(c,rec)$ & // Add edges \\
    15:& \hspace{4mm} \textbf{for} $e \in \mathbf{G}_B.E$ \textbf{do}: & // Iterate edges \\
    16:& \hspace{6mm} $\mathbf{G}_B.merge(get\_vertices(e))$ & // Merge cluster vertices \\
    17:& \hspace{2mm} $DBs = ord([\mathbf{D}^M_1,\mathbf{D}^M_2, \cdots, \mathbf{D}^M_p])$ & // Order databases \\
    18:& \hspace{2mm} \textbf{for} $i \in [1, 2, \cdots p]$ \textbf{do}:& // Iterate parties \\
    19:& \hspace{4mm} $\mathbf{G}_B.split(DBs[i].recs)$ & // Split this party's records \\
    20:& \hspace{4mm} $opt\_E = map(\mathbf{G}_B.E)$ & // 1-to-1 mapping \\ 
    21:& \hspace{4mm} \textbf{for} $e \in opt\_E$ \textbf{do}: & ~ \\
    22:& \hspace{6mm} $\mathbf{G}_B.merge(get\_vertices(e))$ & // Merge cluster vertices \\
    23:& \hspace{2mm} $\mathbf{G}.add(\mathbf{G}_B)$ & // Add $B$'s clusters to $\mathbf{G}$\\ 
    24:& \hspace{0mm} \textbf{for} $c \in \mathbf{G}$ \textbf{do}: & // Iterate final clusters\\
    25:& \hspace{2mm} \textbf{if} $|c| \ge s_m$ \textbf{then}: & // Size at least $s_m$ \\
    26:& \hspace{4mm} $\mathbf{M}.add(c)$ & // Add to $\mathbf{M}$\\ 
    27:& return $\mathbf{M}$ & // Output $\mathbf{M}$ \\ 
      \hline 
  \end{tabular*}
\end{table}

\subsection{Late mapping-based clustering} 
\label{subsec:late_mapping}

The early mapping-based approach (described in the previous
section) is efficient in terms of the number of comparisons
required, as we will discuss in Section~\ref{sec-analysis}.
However, since the optimal mapping is conducted between
the records of a database and only the records from the previously
processed databases, 
early mapping can potentially lead to a reduction of the quality of the final linkage results.
In this section, we propose a late mapping-based approach to
improve linkage quality at the cost of more comparisons.  

In addition to one-to-one mapping
and merging vertices (as described for the early mapping
approach in the previous section), 
the late mapping approach involves a 
third phase, which is splitting vertices. 

\smallskip
\textbf{Splitting vertices}: 
Records $r_{i,x}$ that
belong to a database $\mathbf{D}_i$ in a cluster $c$ are 
split into singletons
containing the records $r_{i,x}$,
while the remaining records from other databases
$r_{j,y} \in \mathbf{D}_j$ are kept in $c$
(with $1 \le i,j \le p$ and $i \neq j$).

We next describe the steps of 
late mapping-based clustering (continuing after the initial steps (1) to (3)).
It requires
$p-1$ iterations first to merge records from $p$ parties, 
and then $p$ iterations for splitting and
applying one-to-one mapping,
as illustrated in Figure~\ref{fig:clus_mpprl2}
and 
outlined in Algorithm~2.

\begin{enumerate}[(1)]

\setcounter{enumi}{3}

\item For each block $B$, 
the masked records in $B$ of the first party
$P_i$ ($1 \le i \le p$)
(in any order) 
are added into a graph $\mathbf{G}_{B}$ as separate vertices 
(lines~2 to 8 in Algorithm~2). 
Then the second party $P_2$'s  
masked records are inserted
into the graph $\mathbf{G}_{B}$ as separate vertices and the similarities between 
these singleton vertices
of $P_1$ and $P_2$ are calculated (lines~9 to 12). 
If the similarity
between two vertices
is at least $s_t$, then an edge is created between them
(as shown in Figure~\ref{fig:clus_mpprl2} and
described in lines~13 and 14 in Algorithm~2).

\item This leads to several many-to-many matched vertices 
between the two parties, $P_1$ and $P_2$,
since no early optimal mapping is applied.
The vertices that are connected by an edge are then merged
into one cluster (lines~15 and 16), while the vertices that do not have
any connecting edge (those that are not matching to any vertices
in the other database) are kept as unclustered vertices.


\item The $LU$ then proceeds with the masked records of the
remaining parties, where the records are first added in $\mathbf{G}_B$
as singleton
vertices, and then the similarities between these singletons and the
clusters from all previous parties' masked
records are calculated and an edge is created connecting those vertices
that have a similarity of at least the minimum threshold $s_t$ (lines~9 to 14). The vertices connected by an edge are
merged into one (lines~15 and 16),
while the vertices that are not matching to any other vertices
remain as unclustered vertices.

\item This process of merging vertices is repeated 
until the masked records of all parties are processed 
(i.e. $p-1$ iterations for each block).
The output will be clusters
that are overlapping, 
which means a record from one party might be in several clusters (i.e.\
matching with several sets of records from other parties).
In the example shown in Figure~\ref{fig:clus_mpprl2}, the merged clusters are overlapping. For example, $r_{1,1}$ is in two clusters and $r_{3,4}$ in 3 clusters.
Since 
the databases
are deduplicated, a record must be matching only to one set of records from
other databases. Therefore a late one-to-one mapping needs to be applied on
all clusters. 

\item In order to conduct late one-to-one mapping, the parties are 
ordered using the $ord(\cdot)$ function (similar to the early mapping approach),  
and the records of the first party
in the ordered list are split from the clusters into singleton vertices
(one vertex for each unique record) using the $split(\cdot)$ function
in lines~17 to 19.
In the example in Figure~\ref{fig:clus_mpprl2}, records from party $P_1$ are split from the merged clusters into singletons in iteration 1.
The optimal one-to-one mapping is then applied
between the singletons and the clusters containing unique sets of records from
other parties (lines~20 to 22).
The number of edges generated for mapping in each iteration corresponds to the number of clusters that appear before splitting in that iteration. In the running example shown in Figure~\ref{fig:clus_mpprl2}, the number of clusters before iteration 1 is $8$ (initial merged clusters) and therefore iteration 1 generates $8$ edges between $P_1$'s singletons and other parties' clusters.
This results in the first party's records being clustered with the
highly matching set of records from other parties.
For example, $r_{1,1}$ is mapped and merged with the cluster containing $(r_{2,2},r_{3,1})$, while $r_{1,2}$ is merged with the cluster of records $(r_{2,1},r_{3,2})$.

The process is
repeated for all parties ($p$ iterations) in the ordered list until the 
set of non-overlapping
clusters is obtained. 
As shown in Figure~\ref{fig:clus_mpprl2}, the set of non-overlapping clusters are generated after 3 iterations of splitting and merging of clusters (with one party's records at an iteration) for linking $p=3$ databases. 

It is important to note that late mapping
requires more cluster comparisons than early mapping, 
as it does not prune edges at an early stage potentially 
leading to many merged clusters. However, it potentially results in better 
linkage quality
since the late one-to-one mapping considers all parties' records,
unlike in early one-to-one mapping where only the previous parties' records are
considered.  

\begin{figure*}[!th]
\centering
\scalebox{0.91}[0.8]{\includegraphics[width=0.89\textwidth]{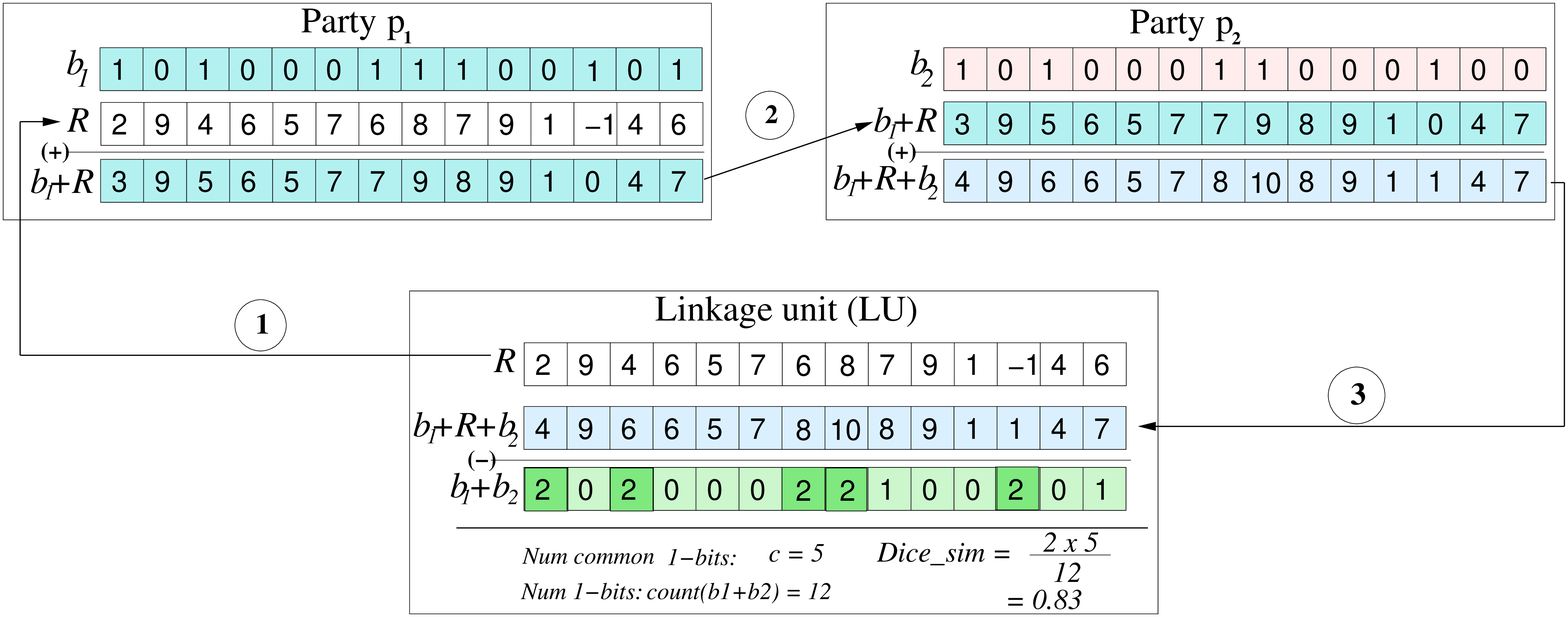}}
\caption{\small{An example of using CBFs generated from two BFs $b_1$ and $b_2$ from two respective parties $P_1$ and $P_2$ using a secure summation protocol for similarity calculation, as described in Section~\ref{sec-privacy_improvement}.
}
}
\label{fig:cbf}
\end{figure*}

\item The final (non-overlapping)
clusters of block $B$ (i.e.\ vertices in graph $\mathbf{G}_B$)
are added to $\mathbf{G}$ (line~23).
The final clusters $c \in \mathbf{G}$ with $|c| \ge s_m$ 
(i.e.\ vertices containing matching records from at least $s_m$ parties) 
are added to the final matching set of records $\mathbf{M}$ (lines~24 to 27). 

\end{enumerate}

\subsection{Improving Privacy}
\label{sec-privacy_improvement}

Our clustering-based MP-PPRL protocol can be used with any encoding/masking technique, such as BF encoding~\cite{Sch09} as used in the example described in Figure~\ref{fig:bloomfilter}. BF encoding is one of the widely used methods in PPRL due to its efficiency compared to cryptographic methods and controllable/tunable privacy-accuracy trade-off~\cite{Vat13,Sch09,Bro19}.

However, BFs are susceptible to inference attacks by adversaries as has been shown in several studies~\cite{Chr18,Chr18b,Kuz11,Nie14}. 
Counting Bloom filter (CBF), a variation of BF (as described in Section~\ref{sec-preli}), provides improved privacy guarantees compared to BF for multi-party PPRL~\cite{Vat16b}. 
We therefore adapt the CBF-based approach for our protocol to improve privacy against inference attacks. Instead of all parties sending their records' BFs to a $LU$, they can generate CBFs from the BFs using a secure summation protocol, as shown in Figure~\ref{fig:cbf}. For example, in the first iteration the first two parties participate in a secure summation protocol~\cite{Clif02} with a $LU$ and generate CBFs for every pair of BFs. 

In the basic secure summation protocol~\cite{Clif02}, the $LU$ provides a random vector $R$ to the first party, which adds its BF $b_1$ to $R$ and sends the summed vector $b_1+R$ to the second party. The second party then adds its BF $b_2$ to the received sum and sends back the final summed vector $b_1+b_2+R$ to the $LU$. The $LU$ subtracts the random vector $R$ from the received sum to generate the CBF $c = b_1+b_2$. Using the generated CBFs, the $LU$ calculates the similarities of pairs of BFs from the two parties (Equation~\ref{eq:Dice_coefficient_cbf}). 

In the second iteration the $LU$ already has the possible matches (clusters) from the first two parties. A secure summation protocol is then used by the first three parties and the $LU$ to generate CBFs from all matches identified in the first iteration along with every BF from the third party. Note that every iteration requires different BF encoding by the corresponding parties to avoid the $LU$ learning the new party's BFs. This is repeated until all parties' records are compared by the $LU$.

As discussed in Section~\ref{subsec_privacy_analsis} in detail, CBFs are less vulnerable to inference attacks. However, they incur memory cost ($l \times \lceil log_2(p) \rceil$ for $p$ BFs of $l$ bits) as well as communication costs. Every $i^{th}$ iteration requires $i+1$ communication steps in the secure summation protocol (for example, when $p=2$, the number of communication steps required for secure summation is $3$, as illustrated in Figure~\ref{fig:cbf}) to generate the CBFs. 


\section{Analysis of the Protocol}
\label{sec-analysis}

In this section we analyze our MP-PPRL protocol
with regard to complexity, privacy, and linkage quality.

\subsection{Complexity Analysis}
\label{subsec_comp_analsis}
Assume $p$ parties participate in the linkage
of their respective databases, each containing
$n$ records, and $b$ blocks are generated by the
blocking function, each block containing $n/b$ records. 
%
In step~(1) of our protocol 
(as described in Section~\ref{sec-protocol}),
masking records (with $g$ average $q$-grams per record) into
BFs of length $l$ using $k$ hash functions for $n$
records is $O(n\cdot g\cdot k)$ for each party.
Blocking the databases in step~(2) has $O(n)$
computation 
and $O(p\cdot b)$ communication complexity 
(assuming $b \le n$ blocks)
at each party. 
In step~(3),  
$n$ masked records from $p$ parties need to be 
sent to the $LU$ for conducting the linkage, which is
of $O(p \cdot n)$ communication complexity.

The early mapping-based approach for incremental clustering
has guaranteed quadratic 
worst case computation
complexity
in both $p$ and $n$.
The worst case (in terms of the number of comparisons required)
occurs with early one-to-one mapping
in two ways:
when merging vertices from a database $\mathbf{D}_i$
with the vertices in the graph $\mathbf{G}$,  
(a) no vertices (records) in $\mathbf{D}_i$ 
match with vertices in $\mathbf{G}$ resulting in $n$ additional
singleton vertices in every iteration, or 
(b) every vertex/record in $\mathbf{D}_i$ matches with
a vertex in $\mathbf{G}$ resulting in $n$ 
vertices with one additional record in every iteration
leading to $n$ final vertices 
containing $p$ records.
%
%
The protocol requires $p-1$ iterations for mapping and merging records from $p$ databases in each of the $b$ blocks.
Generally, comparing $n$ records from $\mathbf{D}^M_i$
with vertices in $\mathbf{G}$ 
in the worst case is of $O((i-1)n^2/b)$, with $1 \le i \le p$.
Therefore, the total worst case complexity is 
$O(n^2/b + 2n^2/b + 3n^2/b + \cdots + (p-1)n^2/b)$,  
    which is $O(n^2/b \cdot p^2)$. 

The late mapping-based approach
has an exponential computation complexity
in the worst case scenario assuming each record 
from a database is matched to all records in
all other databases (due to the many-to-many matching),
leading to 
$O(b \cdot (n/b)^p)$ overlapping final clusters each containing $p$ records. 
However, assuming the databases are individually deduplicated (as discussed in Section~\ref{sec-preli})
and an appropriate similarity threshold is used for merging clusters, 
only a small number of additional clusters ($o$ with $o \ll n^2$) are 
generally generated in each iteration ($n + o$ merged clusters in total).
This leads to an average computation complexity of 
$O(n(n)/b + n(n+o)/b + 2n(n+o)/b + \cdots + (p-1)n(n+o)/b) = p(p-1) \cdot n \cdot (n+o)/b $,
%
which is $O(p^2 \cdot (n^2+no)/b )$.
%
Therefore, the computation complexity of late mapping in the
average case is quadratic in both $p$ and $n$.

Overall, our MP-PPRL protocol
has a worst-case quadratic 
computation complexity 
and a linear communication complexity 
in the number of records $n$ and databases $p$, 
which are both significantly lower than 
the exponential complexities of earlier
MP-PPRL protocols~\cite{Vat14c,Vat16b,Lai06}. 
Please note that extending existing 
PPRL techniques (that can link two databases with quadratic complexity) to multi-database linkage requires the additional step of clustering once the pair-wise similarities have been calculated.
Investigating other clustering algorithms that have been developed for record linkage~\cite{Has09c,Nan19,Sae18} in the context of MP-PPRL is subject to future research. 

\subsection{Privacy Analysis}
\label{subsec_privacy_analsis}

As with most existing PPRL approaches,
we assume that all parties follow
the 
honest-but-curious 
adversary model~\cite{Lin09}, 
where
the parties follow the protocol while being 
curious to find out as much as possible 
about 
the other parties' data 
by means of inference attacks on masked records or by colluding with other parties~\cite{Vat14}.
We assume the private blocking technique used 
(as a black box) does not reveal any
sensitive information to any parties,
and the blocks generated meet the required 
privacy guarantees, such that each block contains at least a minimum number ($k$) of records~\cite{Vat14} or are differentially private~\cite{Kuz13}, to overcome
frequency attacks. 

In the matching step
the parties send their 
masked records (BFs) to the $LU$ to conduct the
linkage. 
In order to overcome inference attacks by the $LU$
on the BFs, the counting Bloom filter (CBF)-based approach 
(described in Section~\ref{sec-privacy_improvement})
can be applied
where the $LU$ sequentially gets a CBF from the relevant 
set of parties for each cluster of records. Using the CBF
the $LU$
can calculate cluster similarity as equivalent 
to calculating cluster similarity
using individual BFs~\cite{Vat16b}. 
CBFs significantly reduce the risk of inference
attack compared to BFs~\cite{Vat16b}. 
An inference attack allows an adversary to map a list of known values from a global dataset (e.g.\ $q$-grams or attribute values from a public telephone directory) 
to the encoded values (BFs or CBF) using background information (such as frequency)~\cite{Kuz11,Vat14}.
The only information that can be learned from such an inference attack using a CBF $c$ of a set of $x$ BFs 
(summed over $x$ parties) is if a bit position in $c$ is either $0$ or $x$ which means it is set to $0$ or $1$, respectively, in the BFs of all $x$ parties.

\begin{prop}
The probability of identifying the unencoded (original)
values of $x$ ($x > 1$) individual records $R_i$ (with $1 \le i \le x$) given a single CBF $c$ is smaller than the probability of identifying the unencoded values of $R_i$ given $x$ individual BFs $b_i$, $1 \le i \le x$.
\begin{eqnarray}
\label{eq:pr_inf} 
\forall_{i=1}^{x} Pr(R_i|c) < Pr(R_i|b_i) \nonumber
\end{eqnarray}
\end{prop}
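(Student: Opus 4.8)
The plan is to reduce the statement to a coarsening argument about how much each encoding reveals about the underlying record $R_i$. The starting point is the observation made immediately above the proposition: from a CBF $c = \sum_{i=1}^{x} b_i$ an adversary can recover the value $b_i[\beta]$ of an individual BF with certainty only at positions $\beta$ where $c[\beta] = 0$ (forcing $b_i[\beta]=0$ for all $i$) or $c[\beta] = x$ (forcing $b_i[\beta]=1$ for all $i$); at every position with $0 < c[\beta] < x$ the bit $b_i[\beta]$ is undetermined. I would therefore model the inference attack as mapping the revealed information to the set of global values (records) consistent with it, and quantify the success probability $Pr(R_i \mid E)$ as inversely related to the size of this consistent candidate set under the adversary's background knowledge.

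First I would characterize the two candidate sets. Given the individual BF $b_i$, the adversary pins down the entire bit pattern of $b_i$, so the consistent records are exactly $\{R : mask(R) = b_i\}$. Given only $c$, the adversary knows $b_i$ agrees with a partial pattern defined on the positions where $c[\beta] \in \{0, x\}$, while the intermediate-count positions are free; the consistent records are all $R$ whose mask agrees with that partial pattern. Since the true $b_i$ itself agrees with the partial pattern, the first candidate set is contained in the second, and the containment is strict as soon as there exists at least one position with $0 < c[\beta] < x$ together with an alternative global value matching the relaxed pattern. This immediately yields the non-strict inequality $Pr(R_i \mid c) \le Pr(R_i \mid b_i)$.

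Next I would establish strictness for $x > 1$. An intermediate count $0 < c[\beta] < x$ fails to occur at every position only when all $x$ BFs are bit-identical; for records originating from $x$ distinct parties this degenerate case essentially never arises, because typographical variations and hash collisions (Section~\ref{sec-preli}) almost surely produce at least one position where the BFs disagree. Hence with overwhelming probability at least one ambiguous position exists, the candidate set given $c$ is strictly larger, and $Pr(R_i \mid c) < Pr(R_i \mid b_i)$ for every $i$, as claimed.

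The hard part will be making the probability model precise rather than relying on the uniform ``one-over-candidate-count'' heuristic, since realistic inference attacks are frequency-based and weight candidates non-uniformly. The clean way I would handle this is to observe that, once we condition on $b_i$, the map to $c$ is independent of $R_i$, so the chain $R_i \to b_i \to c$ permits a data-processing (garbling) argument: the posterior $Pr(R_i \mid c)$ is an average of the posteriors $Pr(R_i \mid b_i')$ over all $b_i'$ in the fiber of $c$, weighted by $Pr(b_i' \mid c)$, and this averaging can only flatten the distribution, lowering the adversary's maximal identification probability. Strictness again follows from the existence of an intermediate-count position, which guarantees the fiber contains more than the single true $b_i$. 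I would also confirm that the other parties' BFs $b_j$ ($j \neq i$), unknown to the adversary, leak no side information about $R_i$ through $c$ beyond the partial pattern already accounted for, which holds precisely because conditioning on $b_i$ makes $c$ carry no further information about $R_i$.
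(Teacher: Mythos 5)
Your argument reaches the paper's conclusion by a genuinely different mechanism. The paper's proof is a short worst-case counting argument in the probability-of-suspicion framework of~\cite{Vat14}: for an individual BF the candidate set can shrink to $n_g = 1$ (a one-to-one mapping between $b_i$ and its source value), giving $Pr(R_i|b_i) = 1/n_g = 1.0$ in the worst case, whereas a CBF aggregates $x$ records and is therefore consistent with at least $x$ original values, capping $Pr(R_i|c)$ at $1/x < 1$. The operative source of uncertainty there is \emph{attribution ambiguity} --- even a fully decoded CBF does not tell the adversary which of the $x$ summed records is $R_i$ --- a mechanism you never invoke. Your route instead formalizes the \emph{bit-pattern coarsening} that the paper only mentions in the text preceding the proposition: positions with $0 < c[\beta] < x$ leave $b_i[\beta]$ undetermined, so the candidate set consistent with $c$ contains the one consistent with $b_i$, and your data-processing/garbling step extends this to non-uniform, frequency-weighted attacks. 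Each approach buys something: your containment argument yields the (non-strict) inequality pointwise on every instance, whereas the paper's comparison of worst case against worst case does not logically exclude an instance where a BF with large $n_g$ is harder to invert than a CBF with $n_g = x$; conversely, the paper obtains strictness directly from the gap between $1/x$ and $1$, while your strictness rests on the claim that an intermediate-count position ``essentially never'' fails to exist. That is the one soft spot in your write-up: strictness genuinely fails when all $x$ BFs coincide, and dismissing this as measure-zero does not deliver the unconditional $<$ in the statement. To close it you would need to import the paper's attribution-ambiguity observation (the $x$ contributing records are themselves indistinguishable candidates given $c$), which covers the degenerate case provided the underlying values are distinct --- an assumption the paper's own proof also makes implicitly.
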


\textbf{Proof:}
Assume the number of original (unencoded) values that can be mapped to a masked BF pattern $b_i$ from an inference attack is $n_g$. $n_g = 1$ in the worst case, where a one-to-one mapping exists between the masked BF $b_i$ and the original unencoded value of $R_i$. The probability of identifying the original value given a BF in the worst case scenario is therefore $Pr(R_i|b_i) = 1/n_g = 1.0$~\cite{Vat14}. However, a CBF represents $x$ BFs and thus at least (in the worst case) $x$ original (unencoded) values, which leads to a maximum of $Pr(R_i|c) = 1/x$ with $x>1$ (when $x=1$, then $c \equiv b_i$). Hence, $\forall_{i=1}^{x} Pr(R_i|c) < Pr(R_i|b_i)$.

Further, the collusion-resistant secure summation protocols described in~\cite{Vat16b,Tas13}
can be used to overcome the risk of collusion among the parties in order to learn about another party's data.
%
%
We also use the cryptographic long term key (CLK) encoding~\cite{Sch15}
as a BF hardening method,
where QID values of a record are hash-mapped into a record-level BF. 
This approach improves privacy against inference attacks by decreasing the probability of
suspicion~\cite{Vat14}.
%

\subsection{Linkage Quality Analysis}
\label{subsec_quality_analsis}

Our MP-PPRL protocol allows approximate matching
of QID values, in that
data errors and variations are taken into account depending upon
the minimum similarity threshold $s_t$ used. 
Further, our protocol allows subset matching by identifying matching
records across any subset of databases.
This improves the linkage quality of MP-PPRL
where records of a single entity can be either in all 
databases or in a subset of databases only (which is often a realistic scenario
in practical applications). 
To the best of our knowledge, this is the first approach that addresses
subset matching for MP-PPRL.

The two proposed methods of early and late one-to-one mapping
in the incremental clustering approach have a trade-off between complexity
and linkage accuracy. As analyzed in Section~\ref{subsec_comp_analsis}, the early mapping approach has lower computational complexity than the late mapping approach. In the following, we analyze the linkage quality of these two mappings. 

Conducting early one-to-one mapping in every iteration 
before merging clusters
significantly reduces the computation complexity (as discussed in
Section~\ref{subsec_comp_analsis}). 
However, this approach might reduce linkage quality, because when conducting
optimal one-to-one mapping with $P_i$'s records then only the records from the previous parties ($P_1$ to $P_{i-1}$, with $1 < i \le p$)
are considered.
In contrast, the late one-to-one mapping is conducted 
for each
party $P_i$'s records considering
records from all other parties $P_j$,
with $1 \le i,j \le p$ and $i \neq j$. Therefore, late mapping
can improve linkage quality at the cost of more comparisons.

In addition, the linkage quality of our protocol depends on the blocking and the deduplication techniques applied on each database. The higher the quality of deduplication results the 
better the one-to-one mapping achieved in our approach 
will be, leading
to higher linkage quality. 
The parties can also be ordered using different ordering functions considering
the known quality of their databases or the
quality of deduplication results, such that the
best quality database is processed first, as
the initial clusters will then be of higher quality leading to
higher quality clustering in the later iterations~\cite{Nen18}.

Similarly, the average similarity function we use adds a masked record to a cluster if its similarity on average is high with all masked records in the cluster. Different similarity functions, such as minimum similarity (known as complete linkage), where a masked record needs to have a high similarity with all masked records in the cluster, or maximum similarity (single linkage), where a masked record needs to have a high similarity with at least one masked record in the cluster, would have different impacts on the linkage quality. We leave investigating the impact of different ordering and similarity
functions on the linkage quality and efficiency as a future work.


\section{Experimental Evaluation}
\label{sec-experiment}

We empirically evaluate 
the performance of
our MP-PPRL protocol (named as \textbf{AM-Clus})
with the two proposed variations,
early mapping (\textbf{EMap}) and late mapping (\textbf{LMap}), as well as the baseline greedy mapping (\textbf{GMap}),
in terms of scalability, 
linkage quality, and privacy.
In the following sub-section we first describe the
datasets we use in our evaluation. In
Section~\ref{sec-competing-methods} we discuss the
baseline methods to which we compare our proposed
clustering approaches, and in
Section~\ref{sec-measures} the evaluation measures we
employ in our experiments. In Section~\ref{sec-settings}
we then describe our experimental setting, and in
Section~\ref{sec-discussion} we provide an extensive
discussion of the results we obtained in these
experiments.

\subsection{Datasets}
\label{sec-datasets}
One problem with regard to datasets for evaluating MP-PPRL approaches is that there are no datasets available that are generated from multiple parties. Therefore, the general approach to conduct experiments is using multiple datasets sampled with overlap from a single large dataset.
We conducted our experiments on three collections of datasets (including a health dataset): 
%

(1) \textbf{NCVR}: A set of datasets generated  
based on the North Carolina Voter Registration
(NCVR) database (available from \texttt{https://dl.ncsbe.gov/}).
We extracted 5,000 to 1,000,000 records
for 3, 5, 7, and 10 parties with 25\% of matching
records across all parties and 25\% of matching
records across subsets of parties. Note that these datasets are different than the NCVR datasets used for the experimental evaluation conducted in~\cite{Vat14c,Vat16b} for the MP-PPRL approaches that allow full-set matching only (not subset matching). The difference is that these datasets contain 25\% of matching records across any subset (of different sizes) of parties and 25\% of matching records across all parties, whereas in the datasets used by~\cite{Vat14c,Vat16b} 50\% of matching records appear across all parties.

We also sampled 10 datasets each containing 10,000
records such that 50\% of records are non-matches and 5\% of records are true matches across each different subset size of $1$ to $10$ ($1,2,3,\cdots,9,10$),
i.e. 45\% of records 
are matching in any 2 datasets while only 5\% of records
are matching in any 9 out of all 10 datasets.
Ground truth
is available based on the voter registration identifiers to
allow linkage quality evaluation.
%

We generated another series of 
datasets for each of the
datasets generated above, 
where we included 20\% and 40\% synthetically
corrupted records into the sets of overlapping/matching records 
(labelled as 'Corr-20' and 'Corr-40' in the plots, respectively) 
using the GeCo tool~\cite{Tra13}.
For example, if $p=10$ datasets containing 10,000 records each are linked where 5,000 records are matching across at least 2 of these datasets (minimum subset size is 2 in this example), then 1,000 and 2,000 records from these set of true matches are corrupted, respectively.
We applied various corruption functions from the GeCo tool on randomly selected attribute values, including character edit operations
(insertions, deletions, substitutions, and transpositions), and
optical character recognition and phonetic modifications
based on look-up tables and corruption rules~\cite{Tra13}. Since the matching records (either one or many in the set chosen randomly) are corrupted, the linkage quality will drop which allows us to evaluate how real data errors impact the linkage quality.

(2) \textbf{NCVRT}: We have
downloaded the NCVR database every second month since October 2011 and
built a combined temporal database of 
26 such datasets 
(i.e.\ 26 snapshots)
each containing over 5 million records
of voters~\cite{Chr13NC}.
Voter registration identifiers are unique which provides
ground truth for evaluation.
This real temporal
dataset allows conducting large-scale experiments
for MP-PPRL assuming each snapshot corresponds to
the dataset from one party ($p=26$ parties).

(3) \textbf{NSWE}: The third dataset is a real
New South Wales (NSW) emergency presentations dataset from Australia. A previous study that evaluated our proposed method on this sensitive data by the Centre for Data Linkage at  Curtin University provided the presented results~\cite{Ran16b}. In this study, subsets of NSWE dataset were extracted for 5 parties
each containing more than 700,000 records with no duplicates. 
These datasets were linked
by the Centre for Health Record Linkage in Sydney providing ground
truth for the linkage~\cite{Ran13}.


\subsection{Baseline Methods}
\label{sec-competing-methods}

As reviewed in Section~\ref{sec-related}, only two 
MP-PPRL techniques are available for approximate matching of
string data using probabilistic data structures~\cite{Vat14c,Vat16b}. We use these two
as the baseline approaches to compare our proposed 
approach as they are closely related to our work. 
We name these approaches as \textbf{AM-BF}
and \textbf{AM-CBF} for the approximate matching
approaches based on BF~\cite{Vat14c} and CBF~\cite{Vat16b}, respectively. 

\subsection{Evaluation Measures}
\label{sec-measures}

We evaluate
the complexity (scalability) of linkage using \emph{runtime}  
and \emph{memory size} required for the linkage. 
The quality of the achieved linkage is measured using the 
\emph{precision}, \emph{recall}, and
\emph{F-measure}, calculated on classified matches and non-matches,
that have widely been used in record linkage, 
information retrieval and data mining~\cite{Chr12}. 
The ground truth is available for all datasets with known labels of true matches/clusters (either from all $p$ databases for full set matching or subset ($<p$) of databases for subset matching). For example, the \textbf{NCVR}-10000 datasets for $p=10$ parties contain $25\%$, i.e.\ 2,500, record sets (clusters) as true matches from all $p=10$ parties and $25\%$, i.e.\ 2,500, record sets as true matches from any subset ($< 10$) of parties.

Based on the classification of the number of true matching record pairs, TM, from each resulting clusters (either from all $p$ or subsets of databases), false matches, FM, 
and false non-matches, FN, the linkage quality measures are defined as below~\cite{Chr12}:
\begin{enumerate}
\item Precision: the fraction of record pairs in all clusters classified as matches by the PPRL classifier that are true matches: $TM/(TM+FM)$

\item Recall: the fraction of true matches in clusters that are correctly classified as matches by the classifier: $TM/(TM+FN)$ 

\item F-measure: harmonic mean of Precision and Recall: $2 \times (Precision \times Recall)/(Precision+Recall)$
\end{enumerate}

We use the F-measure in our evaluation to allow comparison
with related earlier publications. We however note
that recent research~\cite{Han18} has identified some
problematic issues when the F-measure is used to compare
record linkage classifiers at different similarity
thresholds. This work is ongoing and there is currently
no accepted appropriate new measure that combines precision
and recall into one single meaningful value.

In line with other work in PPRL~\cite{Vat16b,Ran14,Sch15}, we
evaluate privacy using disclosure risk (DR) measures based on the
probability of suspicion $P_s$, 
i.e.\ the likelihood a masked (encoded) database
record in $\mathbf{D}^M$ can be matched with one or several
known values in a
publicly available global database $\mathbf{D}_G$. 
The probability of suspicion for a masked record $r^M$, $P_s(r^M)$,
is calculated as $1/n_g$ where
$n_g$ is the number of possible matches in $\mathbf{D}_G^M$
to the masked record $r^M$.
We conducted a linkage attack~\cite{Vat14} 
assuming the worst case scenario of $\mathbf{D}_G \equiv \mathbf{D}$,
and the BF hash functions are known to the adversary.
Based on such a linkage attack, 
we calculate 
\begin{enumerate}
\item mean disclosure risk ($DR_{Mean}$):
the average risk 
($\sum_i^{|\mathbf{D}^M|}$ $P_s(r_i^M)/|\mathbf{D}^M|$) 
of any sensitive value in $\mathbf{D}^M$ being re-identified~\cite{Vat14}
\item marketer disclosure risk ($DR_{Mark}$): the proportion of masked records in $\mathbf{D}^M$ that
match to exactly one masked record in $\mathbf{D}_G$ ($|\{r_i^M \in \mathbf{D}^M: P_s(r_i^M) = 1.0\}| /|\mathbf{D}^M|$)
\end{enumerate}

\begin{figure*}[ht!]
\centering
 \includegraphics[width=0.42\textwidth]{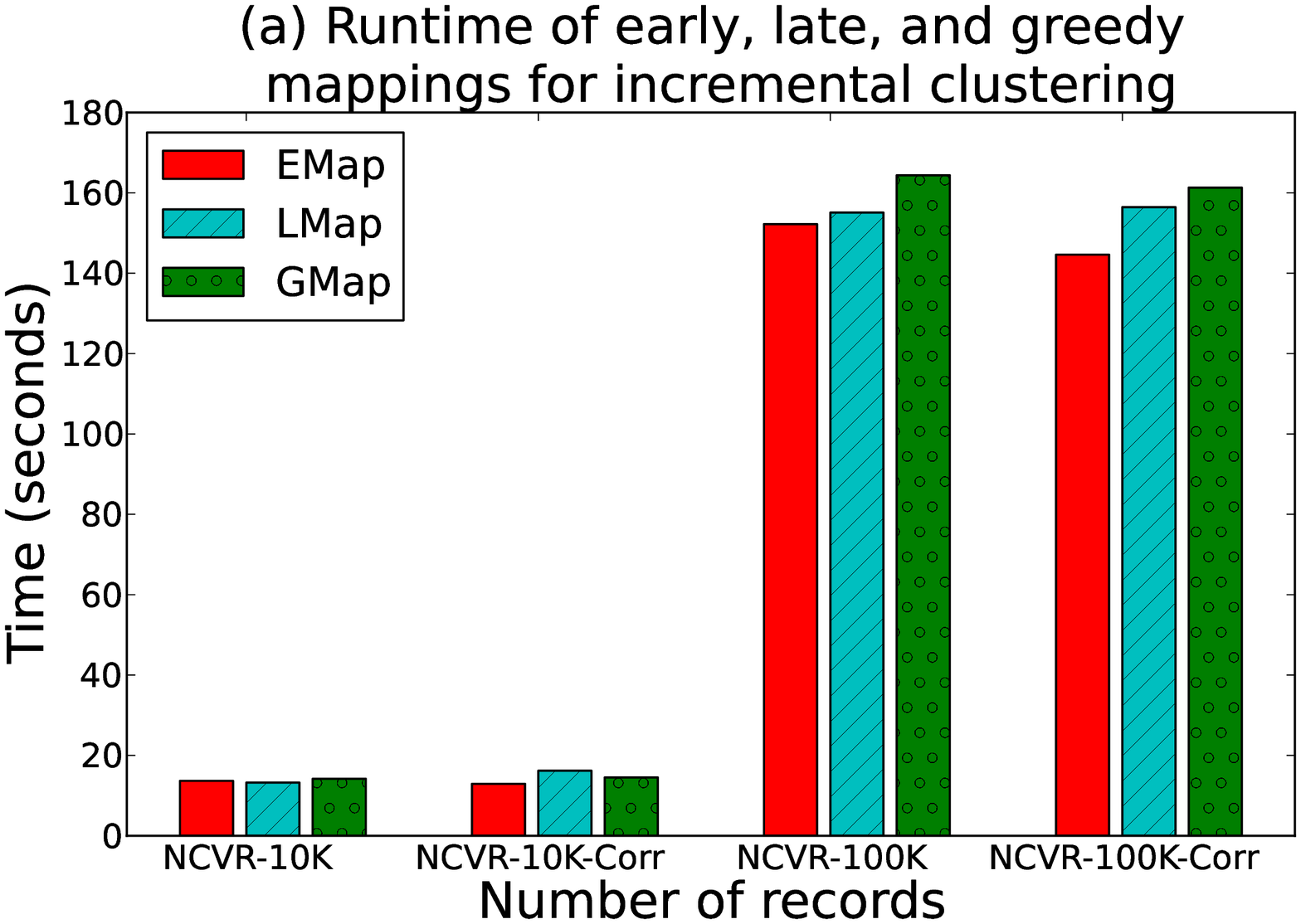}
~~~~~~~~~~
 \includegraphics[width=0.42\textwidth]{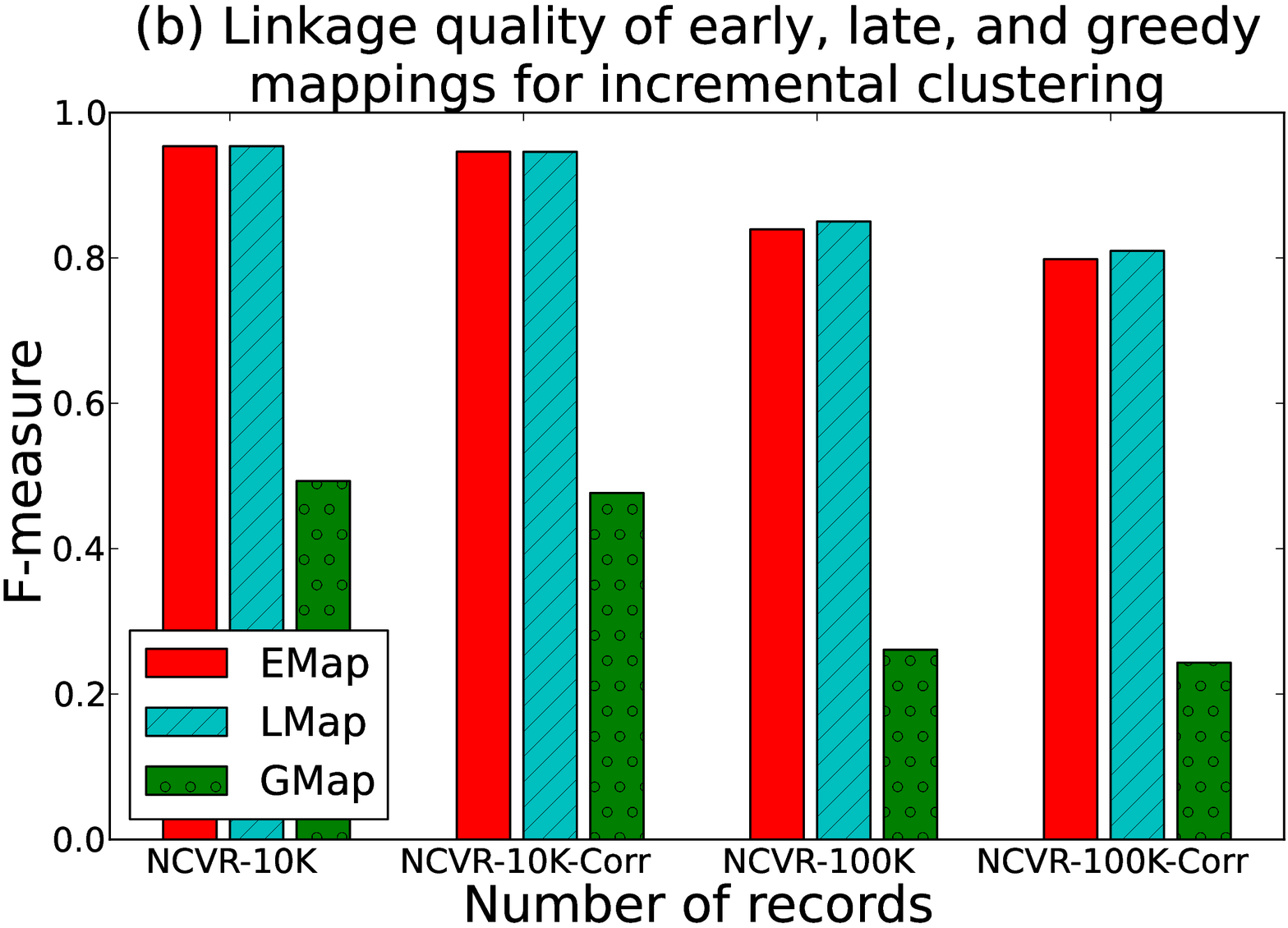}
  \caption{\small{Comparison of (a) runtime and (b) F-measure 
                  for the early mapping (EMap), late mapping (LMap), and baseline greedy mapping (GMap) approaches of incremental clustering on \textbf{NCVR} datasets.}
    }
\label{fig:map}
\end{figure*}

\subsection{Experimental Setting}
\label{sec-settings}

Following earlier BF work in 
PPRL~\cite{Dur13,Vat14c,Sch15}, we set
the parameters as BF length $l=1,000$,
the number of hash functions $k=30$, and the
length of grams (substrings of QIDs) is $q=2$.
Soundex-based phonetic encoding~\cite{Chr12} 
is used as the blocking function. 
%
The last name is used as the blocking key for the scalability experiments for the different sizes of \textbf{NCVR} datasets, while a combination of first and last name attributes are used as the blocking key attributes for other experiments on the \textbf{NCVR} and \textbf{NCVRT} datasets 
due to the large runtime requirement. Using last name as the blocking key results in larger blocks and thus requires longer runtime. However, larger blocks improve privacy against frequency attacks on blocks, which is preferred in privacy-preserving applications~\cite{Vat13}.

We also used an existing
multi-party private blocking function using bit-trees~\cite{Ran14} on the surname and date of birth attributes
for linking the \textbf{NSWE} datasets.
Soundex-based phonetic blocking on first and last names provides an average pairs completeness (similar to recall it calculates the percentage of true matches found in the candidate record sets generated by a blocking method~\cite{Chr12}) of $0.98$ and the bit-trees-based blocking on surname and date of birth values provides $0.96$ pairs completeness.
We used the first name, last name, city, and zipcode attributes
as QIDs for the linkage of the \textbf{NCVR} and \textbf{NCVRT} datasets,
while first name, surname,
date of birth, sex, address, and postcode are
used as QIDs for linking records in the \textbf{NSWE} datasets. These attributes are commonly used personal identifying attributes for linking records across databases~\cite{Chr12}.

\begin{figure*}[ht!]
\centering
 \includegraphics[width=0.42\textwidth]{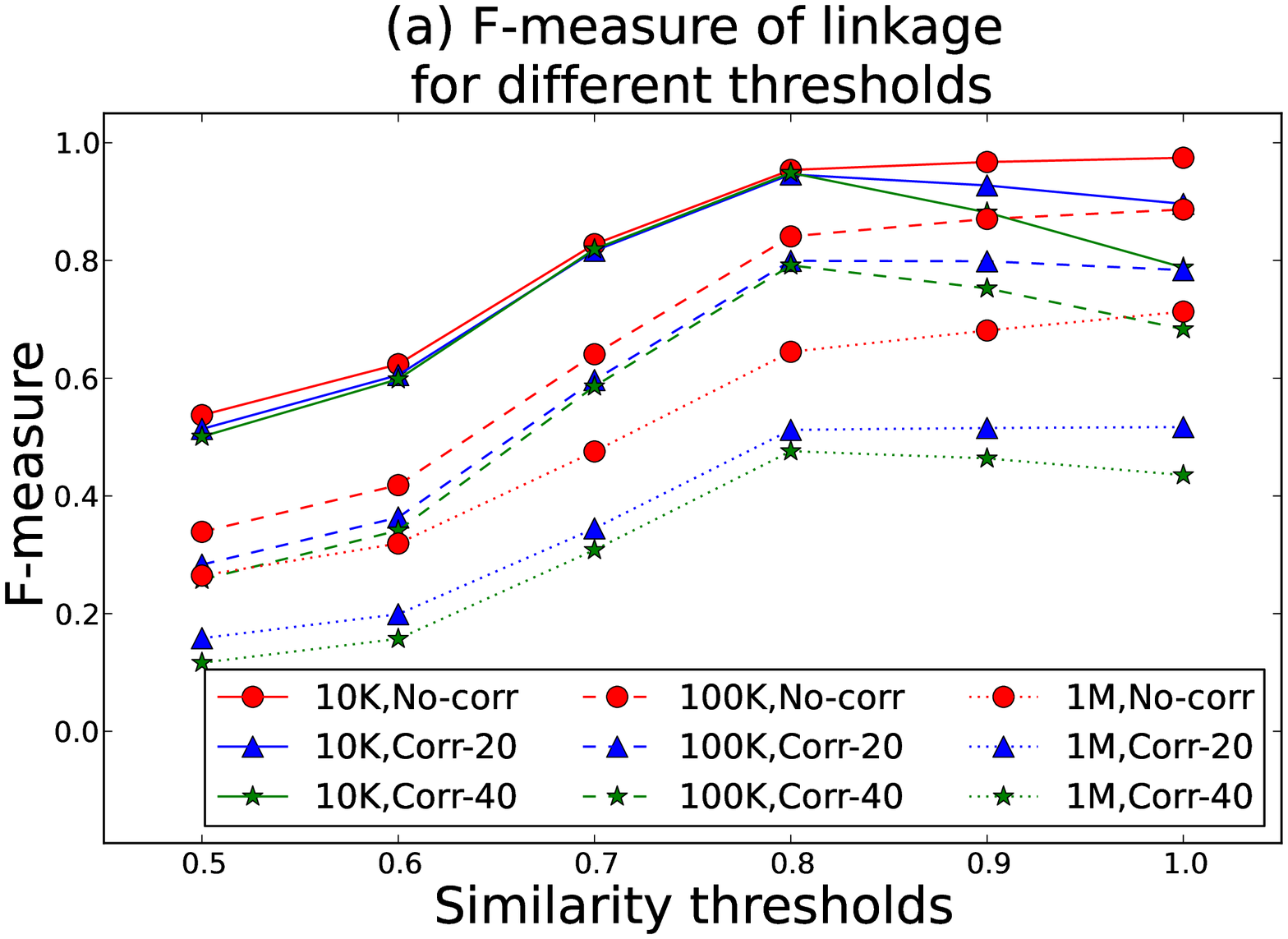}
~~~~~~~~~~
 \includegraphics[width=0.42\textwidth]{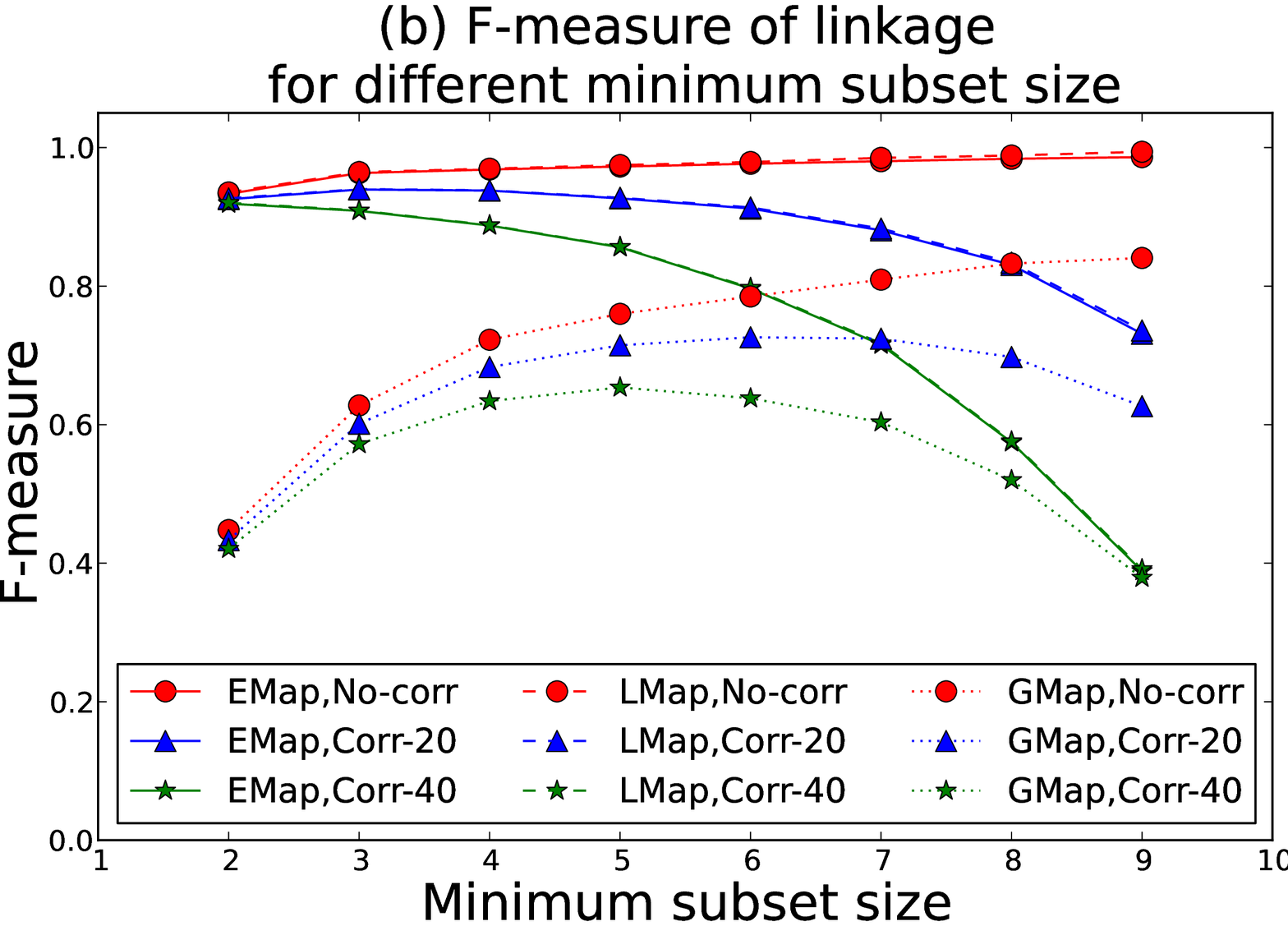}
 
  \caption{\small{F-measure of linkage for (a) different similarity thresholds on \textbf{NCVR}-10K, 100K, and 1M datasets and (b) different minimum subset size on \textbf{NCVR}-10K subset datasets (as described earlier) for $p=10$.}
    }
\label{fig:param}
\end{figure*}

We implemented both our proposed approaches and the competing 
baseline approaches in Python 2.7.3,
and ran all experiments on a server with 
four 6-core 64-bit Intel Xeon
2.4 GHz CPUs, 128 GBytes of memory and running Ubuntu 14.04. The
programs and test datasets
(except \textbf{NSWE}) are available from the authors.

%

\subsection{Discussion}
\label{sec-discussion}

In this section we discuss the results of our experimental study.

%

\medskip
\noindent
\textbf{i. Comparison of different mapping:}
In Figure~\ref{fig:map}\,(a) we compare the runtime for 
our approach based on greedy (baseline), early and late mappings 
(labelled as \textbf{GMap}, \textbf{EMap} and \textbf{LMap}, respectively), 
while in Figure~\ref{fig:map}\,(b) we compare their F-measure results
on the \textbf{NCVR} datasets.
The proposed early and late mappings require
similar or lower runtime than the baseline greedy mapping, and as expected the F-measure achieved with early and late mappings are significantly higher than greedy mapping.
Early mapping requires comparatively lower runtime than
late mapping at the cost of a small loss in linkage quality. 
Since the loss in F-measure is not very significant, we use the early mapping-based approach as a default mapping in the rest of our experiments.

\begin{figure*}[ht!]
\centering
 \includegraphics[width=0.42\textwidth]{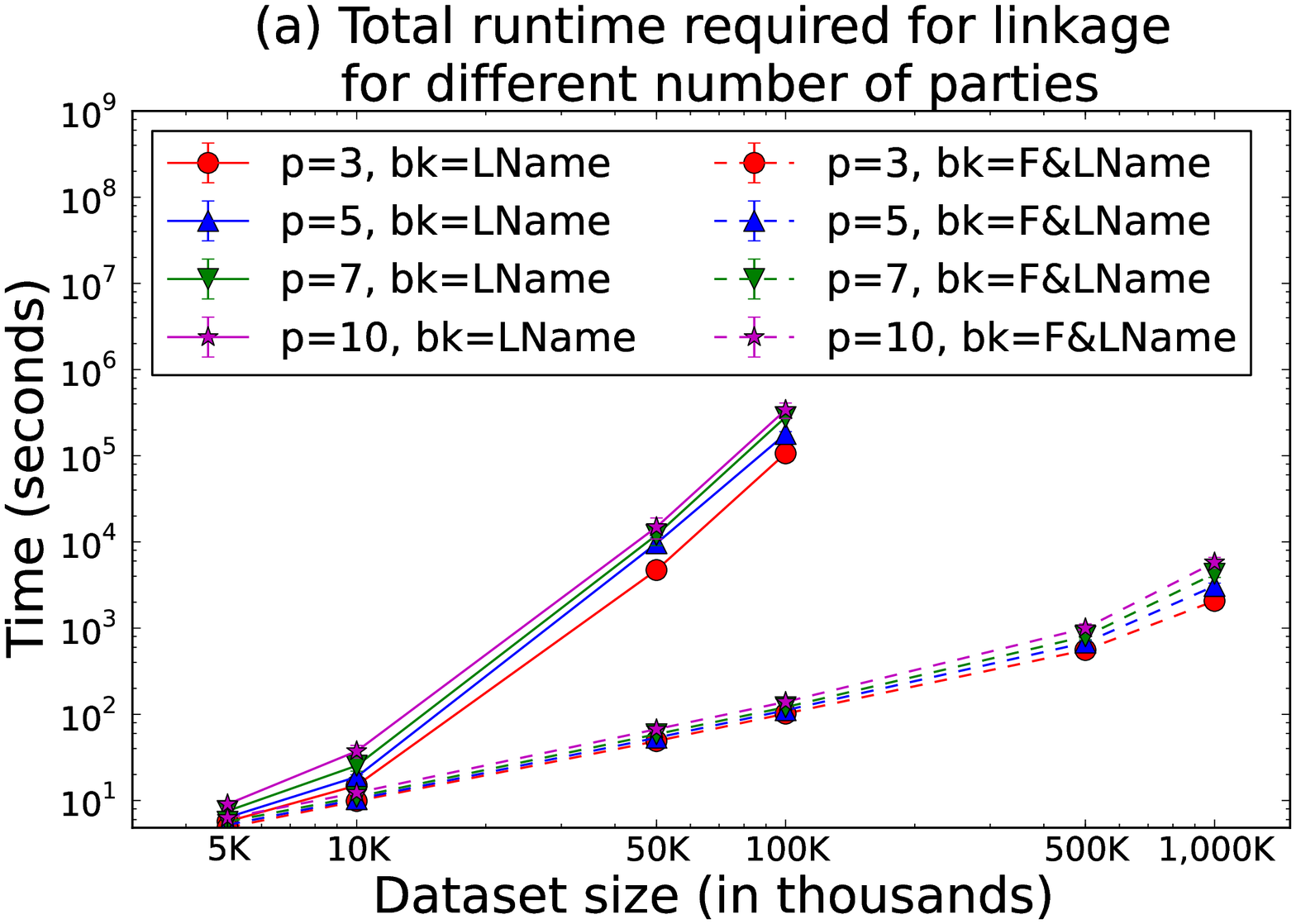}
~~~~~~~~~~
\includegraphics[width=0.42\textwidth]{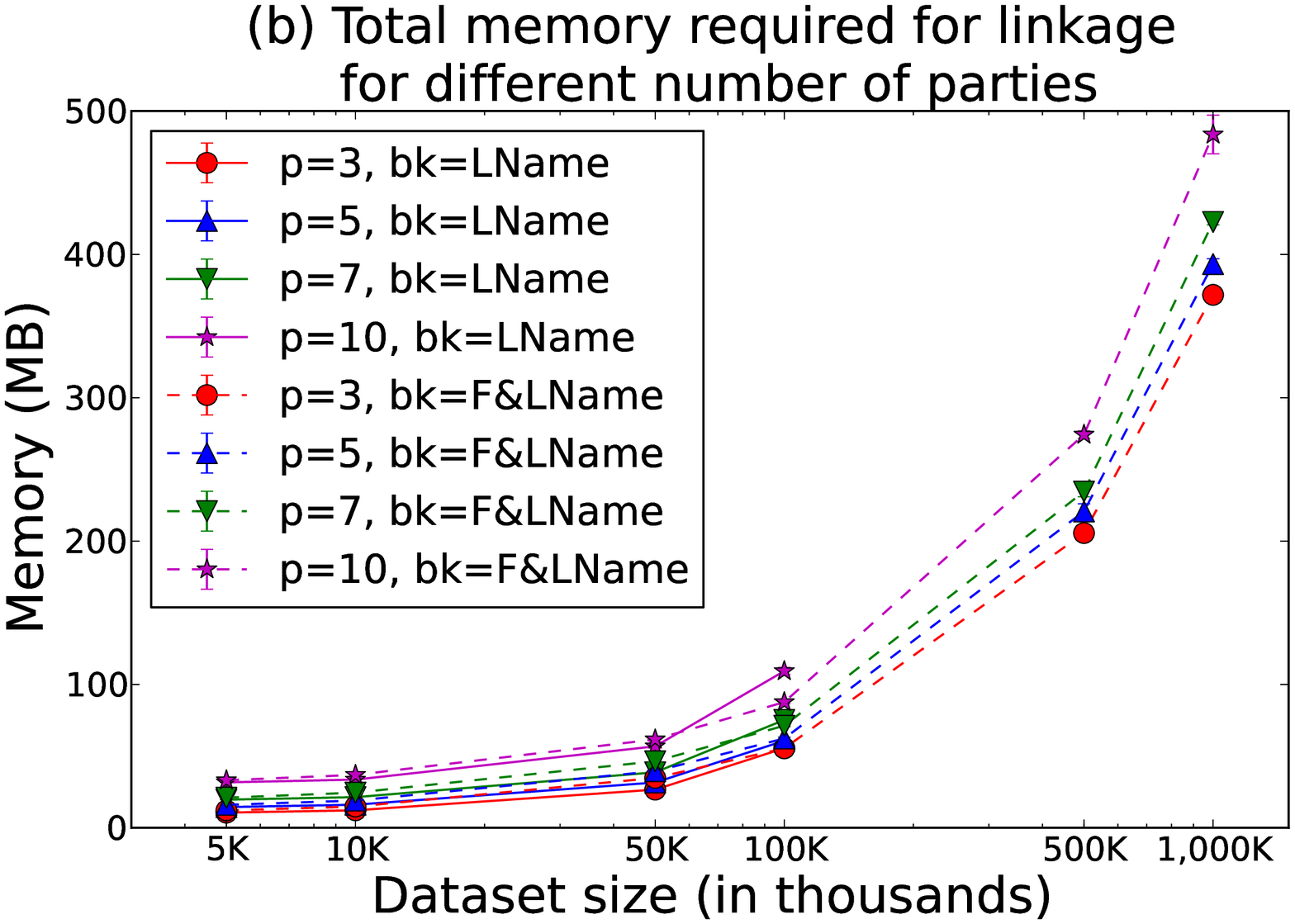}
 
  \caption{\small{Scalability results on different sizes of \textbf{NCVR} datasets in terms of (a) runtime and (b) memory size required for the linkage.}
    }
\label{fig:scal}
\end{figure*}

\medskip
\noindent
\textbf{ii. Similarity threshold vs. linkage quality:}
The F-measure achieved with different similarity thresholds on the \textbf{NCVR} datasets for $p=10$ is shown in Figure~\ref{fig:param}\,(a). The F-measure increases with the threshold up to $0.8$ on all datasets and then drops due to the loss in recall. As can be seen, the F-measure increases with larger thresholds on non-corrupted datasets (which require only exact matching) while it starts to decrease at a certain point on the corrupted datasets (which require approximate matching due to errors and variations). We therefore set the default threshold value to $0.8$ in our experiments. When the datasets are corrupted, the linkage quality becomes very low with increasing dataset sizes. These results indicate that more advanced classification techniques instead of a simple threshold-based classification are required to improve the linkage quality in the presence of real-world data errors~\cite{Chr12}.

\begin{figure*}[ht!]
\centering
 \includegraphics[width=0.42\textwidth]{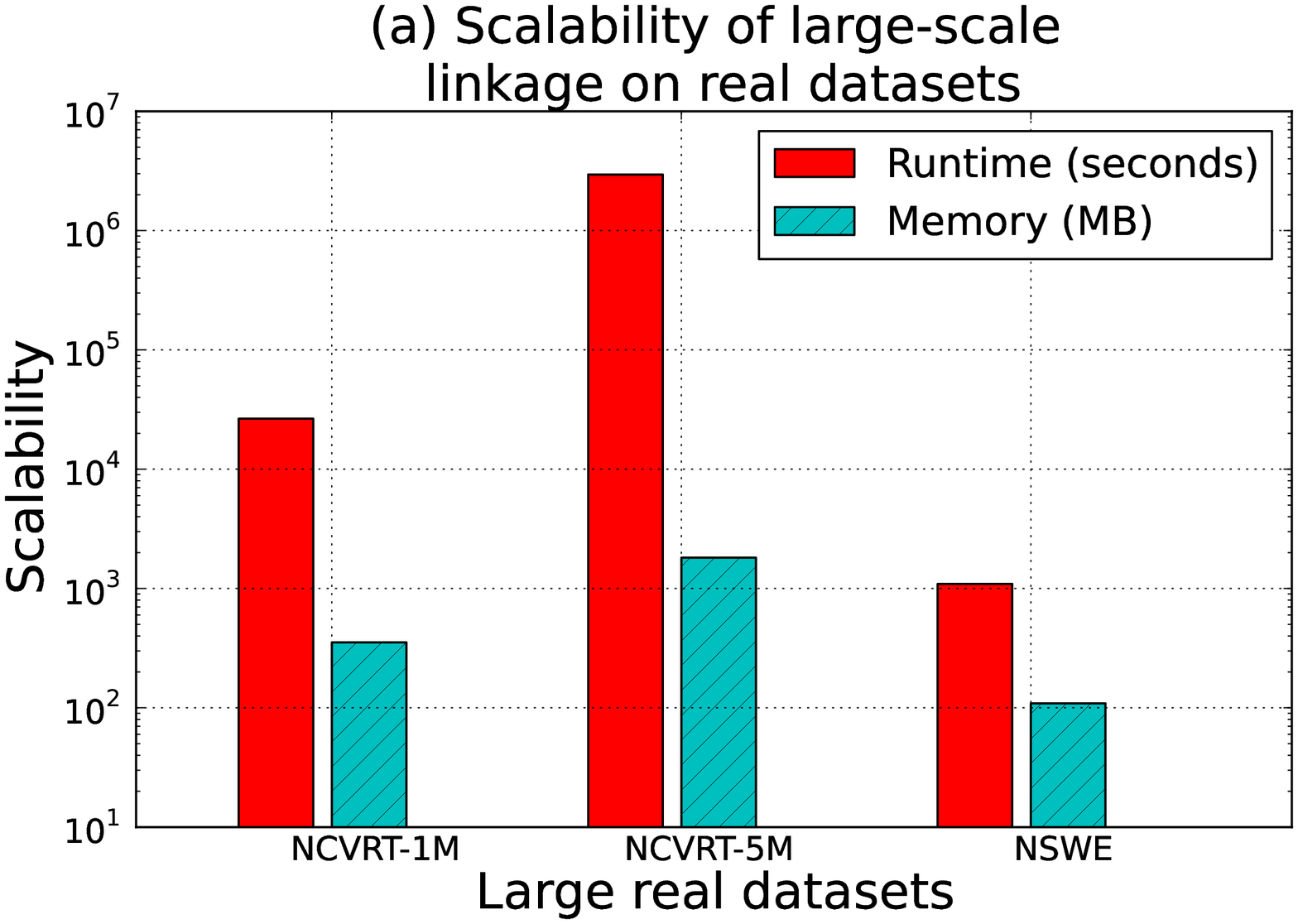}
~~~~~~~~~~
 \includegraphics[width=0.42\textwidth]{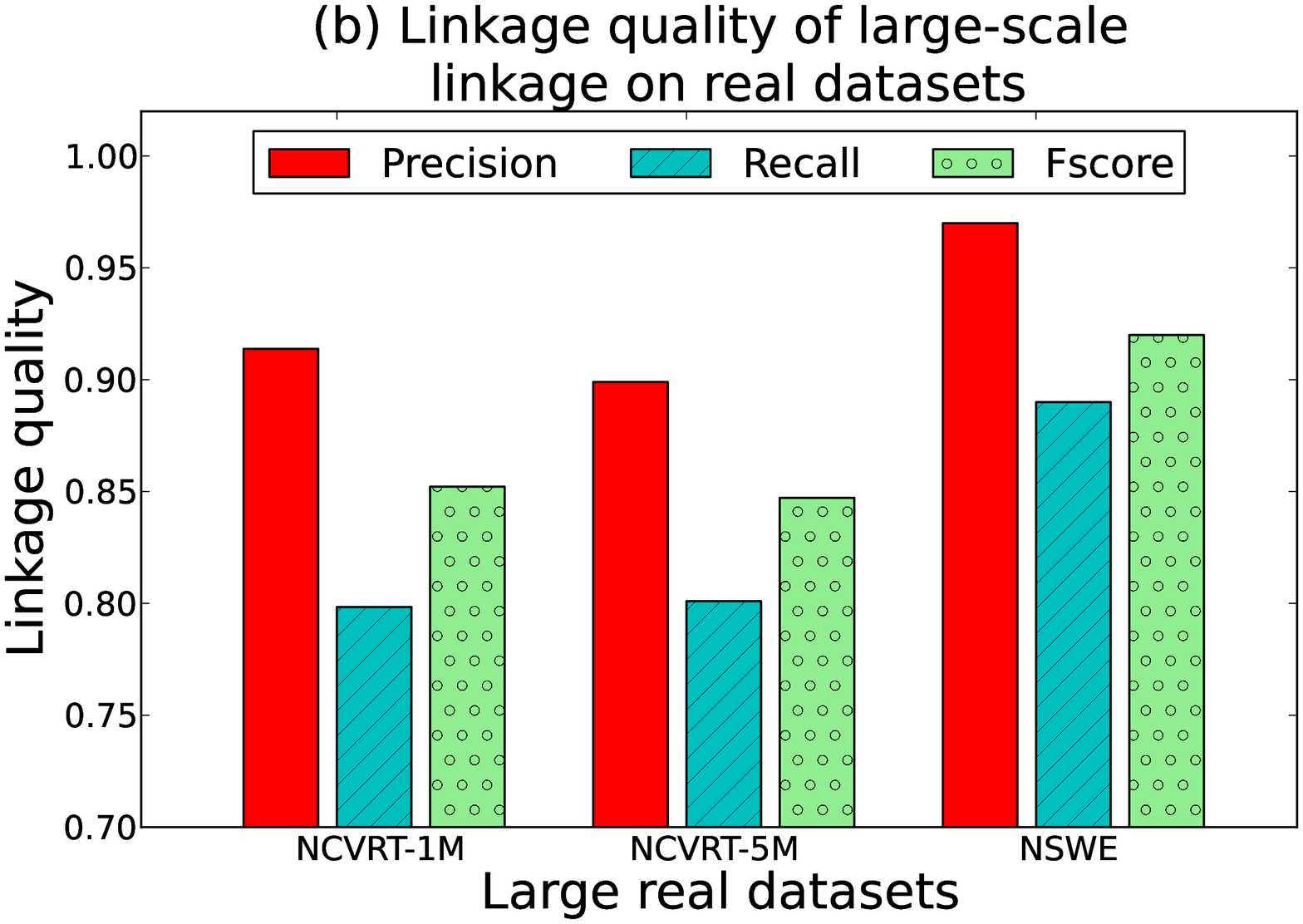}

  \caption{\small{Linkage results on real large-scale datasets (\textbf{NCVRT}-1M ($p=26$), \textbf{NCVRT}-5M ($p=26$) and \textbf{NSWE} ($p=5$)) in terms of (a) scalability and (b) linkage quality.}
    }
\label{fig:large-scale}
\end{figure*}

\medskip
\noindent
\textbf{iii. Minimum subset size vs. linkage quality:}
The F-measure of linkage achieved with different minimum subset sizes on the \textbf{NCVR} datasets for the early, late, and greedy mappings are shown
in Figure~\ref{fig:param}\,(b). Linking with smaller minimum subset size is more challenging than larger minimum subset size, as identifying matches across $2$ or more datasets is more difficult than all $10$ datasets, for example, due to the large number of combinations of datasets to be checked for matches. 
Our proposed mapping approaches outperform the greedy mapping significantly for smaller minimum subset sizes.

As expected, the linkage becomes more challenging with corrupted data using any mapping methods, 
when identifying records that match across
larger number of databases compared to smaller subsets of databases. With increasing number of databases, more corrupted records are included in the matches, resulting in significant loss of linkage quality. 
While data errors in real data are possible, the degree of corruption would be probably relatively low. According to the real NSWE and NCVRT datasets, the quality of data is very high with less than 1\% linkage errors when a probabilistic two-database matching technique is applied on the unencoded NSWE dataset~\cite{Ran18} and around 10\% error in the NCVRT dataset. We have tested relatively pessimistic scenarios by synthetically including 20\% and 40\% corruption to the matching records in NCVR datasets. The results on the corrupted datasets indicate that achieving high linkage quality in the presence of large amount of data errors is a big challenge, which needs to be mitigated through appropriate pre-processing techniques as well as clerical review possibly using active learning techniques~\cite{Vat17b}.

\medskip
\noindent
\textbf{iv. Scalability:}
We next evaluate the scalability of our protocol for different dataset sizes on the \textbf{NCVR} datasets in Figure~\ref{fig:scal}. 
When a combination of first and last name (labelled as \emph{FName} and \emph{LName}, respectively in the figure) attributes are used as blocking keys (\emph{BK}), the resulting block sizes become small, making our protocol highly scalable in terms of runtime and memory size to large datasets from multiple parties. However, when only the last name attribute is used as the BK our protocol shows a quadratic trend with the size and number of the datasets. The experiments on larger datasets of 500K and 1M with one attribute as BK required very large runtime due to the larger block sizes, and therefore we did not conduct this set of experiments due to time limitation. Advanced blocking and filtering techniques are therefore required to further reduce the computational complexity of large-scale multi-party linkage.

\begin{figure*}[th!]
\centering
 \includegraphics[width=0.42\textwidth]{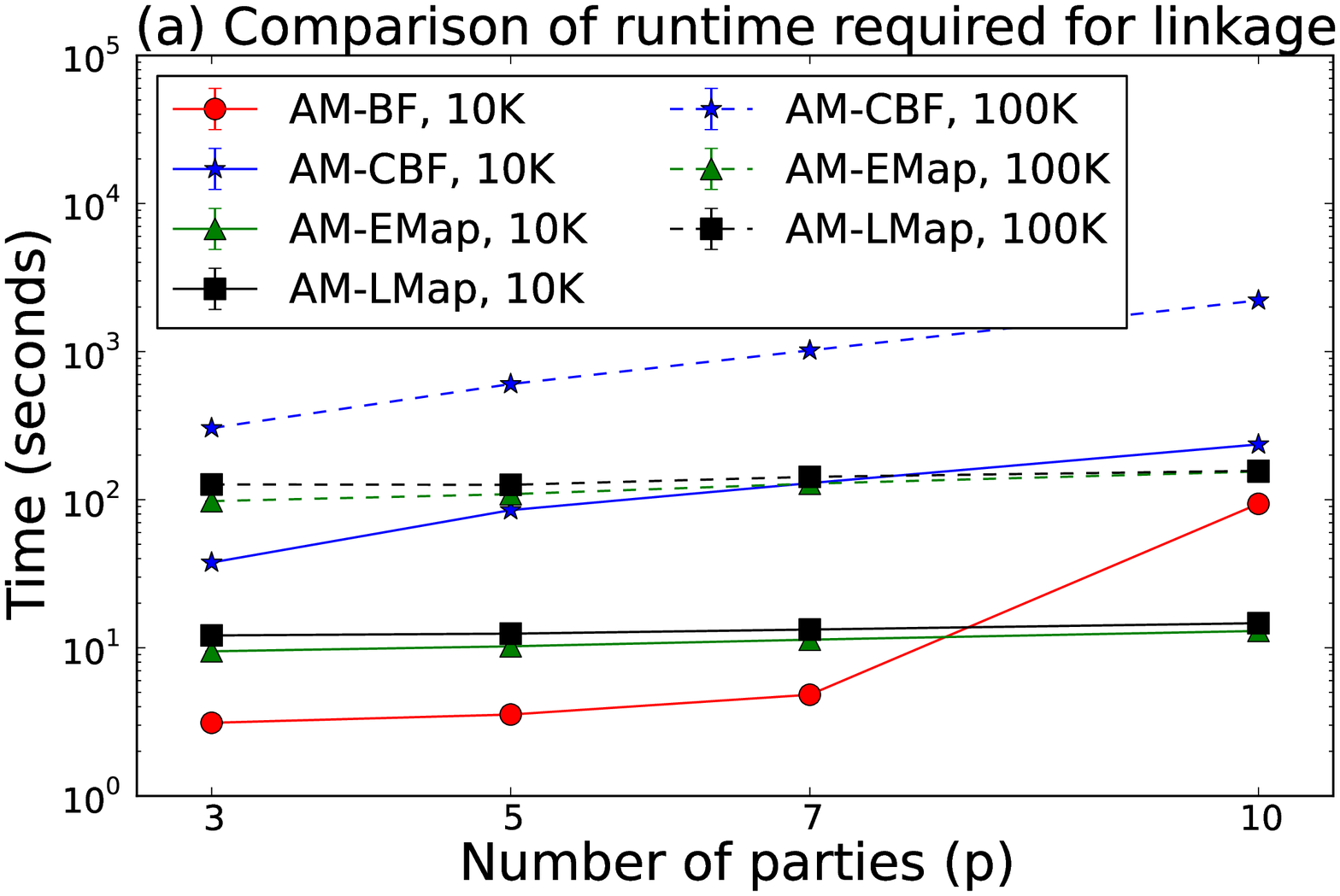}
~~~~~~~~~~
 \includegraphics[width=0.42\textwidth]{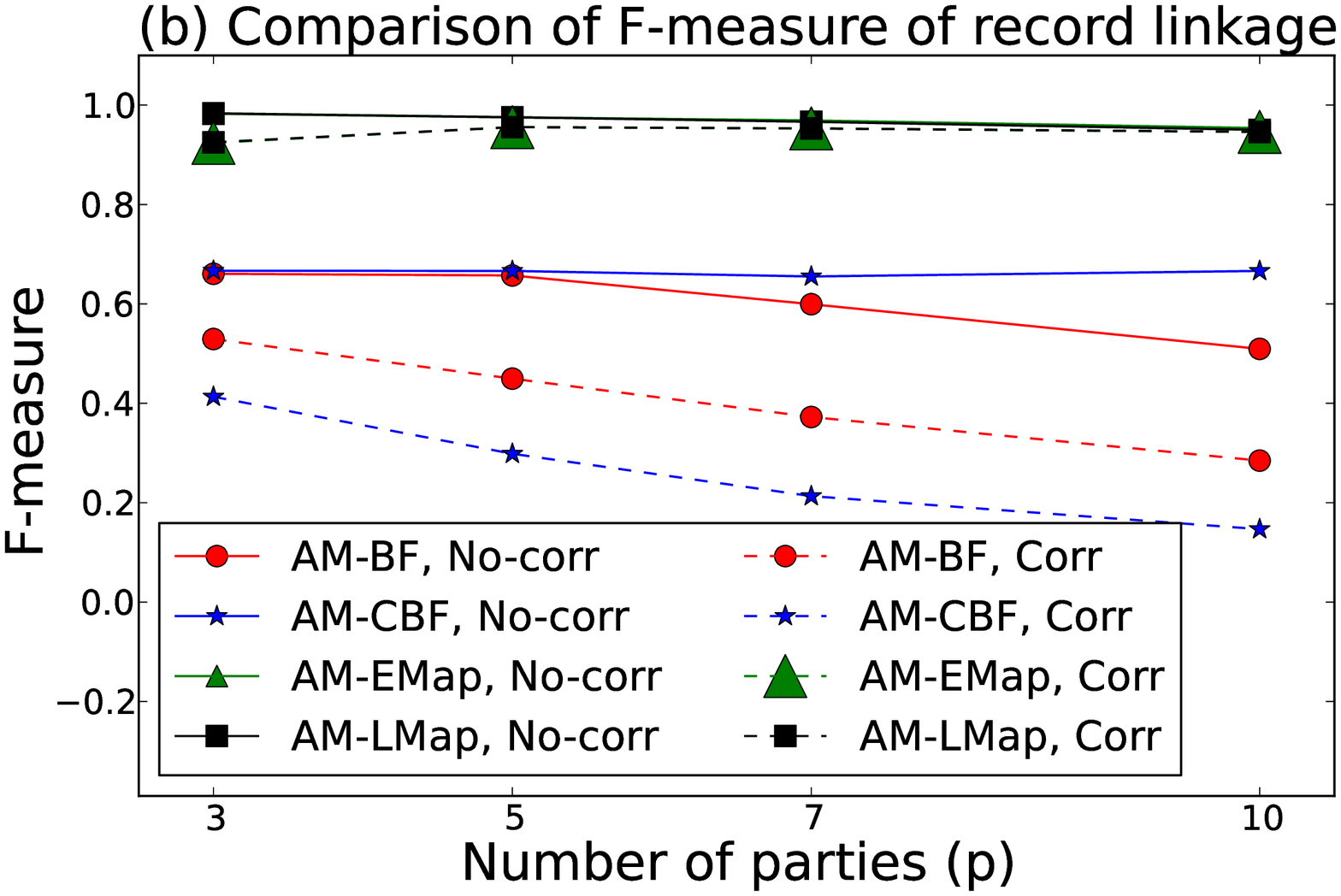}

  \caption{\small{Comparison of (a) runtime and (b) F-measure of our methods with baseline methods on \textbf{NCVR} corrupted (Corr) and non-corrupted (No-corr) datasets.}
    }
\label{fig:comp_baseline}
\end{figure*}

\medskip
\noindent
\textbf{v. Large-scale linkage:}
We conducted large-scale experiments of our approach
on the \textbf{NCVRT} and \textbf{NSWE} datasets. As can be seen
in Figures~\ref{fig:large-scale}\,(a) and (b), we are able to 
link multiple large datasets and achieve high linkage quality, which shows the viability of our approach for large-scale MP-PPRL applications. 
Since multi-database linkage requires an additional step of clustering (or mapping) after pair-wise matching, investigating other better clustering techniques that can achieve improved linkage quality is subject to further research. However, as shown in Figure~\ref{fig:large-scale}\,(a), the runtime required for linking such large multiple datasets is higher (even though it is significantly better compared to the baseline methods, as will be discussed below) and therefore more advanced computational methods, such as distributed computing and parallel processing, need to be investigated to further improve the efficiency of MP-PPRL.

\medskip
\noindent
\textbf{vi. Comparison with baseline:}
We next compare our approach with the 
baseline approaches 
in Figure~\ref{fig:comp_baseline}
in terms of scalability and linkage quality.
As can be seen in Figure~\ref{fig:comp_baseline}\,(a), our approaches (\textbf{EMap} and \textbf{LMap}) 
require lower runtime for linking a large number of databases, 
where the runtime does not increase significantly with $p$
compared to \textbf{AM-BF}. 
The \textbf{AM-BF} approach requires lower 
runtime for linking smaller number of 
databases, however it increases exponentially with
larger $p$. We were unable to conduct experiments for this
approach on the \textbf{NCVR}-100K
datasets due to excessive memory consumption with the exponential
number of comparisons required by this approach.
The \textbf{AM-CBF} approach is more scalable than \textbf{AM-BF} 
for linking a larger number of databases. 
This is because the improved communication patterns with CBF
reduce the exponential growth with $p$ down to the ring size $r$, where $r < p$~\cite{Vat16b}.
However, our proposed methods 
require even lower runtime than \textbf{AM-CBF} and is more scalable with
increasing $p$.

\begin{figure*}[th!]
\centering
 \includegraphics[width=0.42\textwidth]{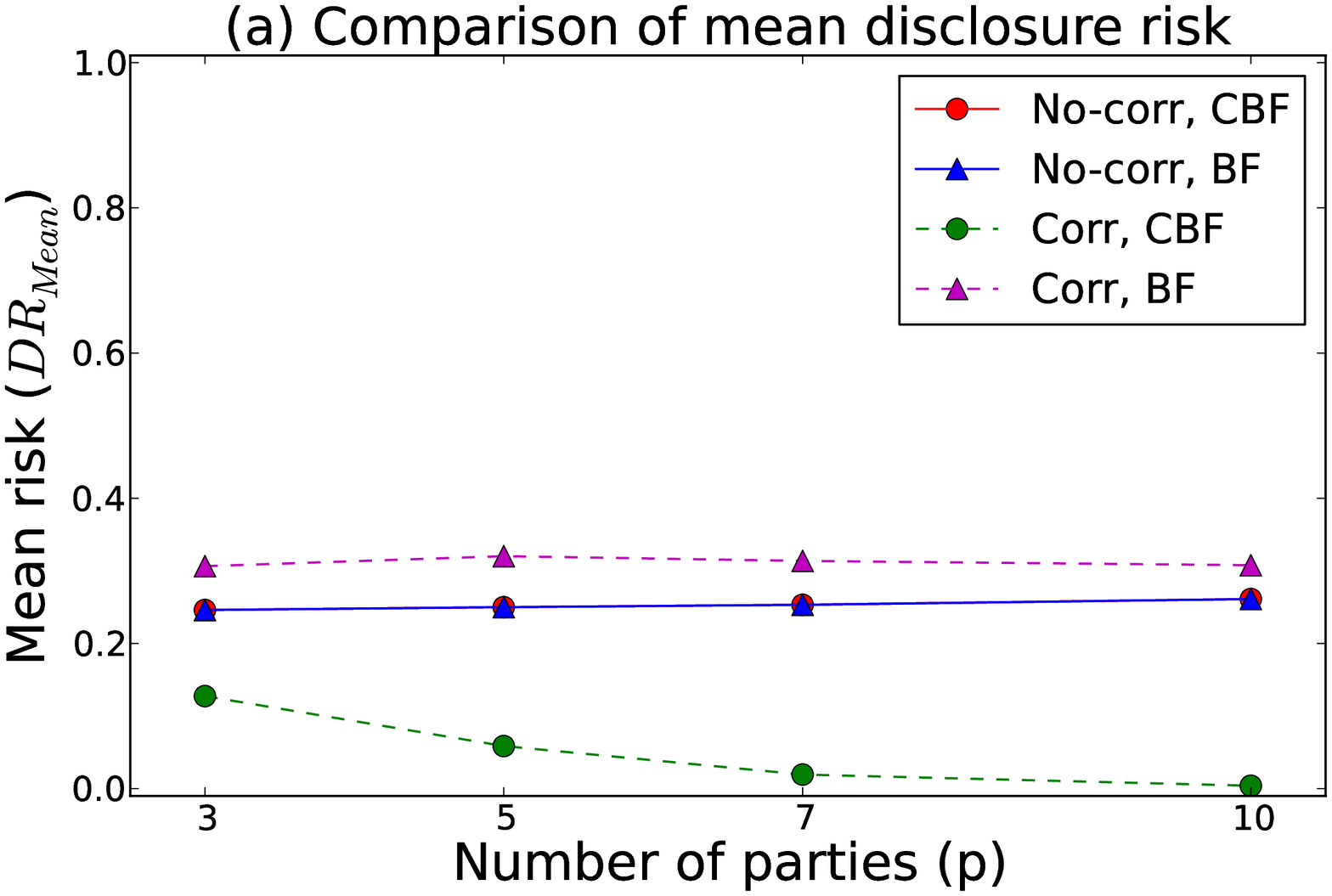}
~~~~~~~~~~
 \includegraphics[width=0.42\textwidth]{}
  \caption{\small{Comparison of (a) mean and (b) marketer disclosure risk measures for BF and CBF encoding methods on \textbf{NCVR} corrupted (Corr) and non-corrupted (No-corr) datasets.}
    }
\label{fig:privacy}
\end{figure*}

As shown in Figure~\ref{fig:comp_baseline}\,(b), our approaches (\textbf{EMap} and \textbf{LMap})
achieve substantially higher F-measure results compared to all baseline methods
on both non-corrupted and corrupted datasets by
identifying matching records not only across all databases but also across
subsets of databases.
We also compared the F-measure of all these approximate matching
approaches with an exact matching MP-PPRL protocol~\cite{Lai06}, and
as expected the approximate matching approaches outperform it
on corrupted datasets.

\medskip
\noindent
\textbf{vii. Disclosure risk results:}
As shown in Figure~\ref{fig:privacy}, the CBF-based masking consistently has lower mean and marketer disclosure risk than BF-based masking (as we discussed in Section~\ref{subsec_privacy_analsis}). 
Therefore, CBF-based masking provides improved privacy than BF-based masking. This means that the same privacy results can be achieved by our protocol in terms of mean and marketer disclosure risks as with other CBF-based approaches~\cite{Vat16b} in the worst case.

This comparative evaluation shows that our \textbf{AM-Clus} approach outperforms existing approaches in terms of scalability and linkage quality,
while providing better/similar privacy results.


\section{Related Work}
\label{sec-related}

Various techniques have been proposed in the literature tackling the
problem of PPRL, as surveyed in~\cite{Vat17b,Vat13,Sch15,Tre08}. 
However, most of these approaches are limited to linking only two databases, 
and only few approaches have considered linking data
from multiple databases (MP-PPRL). 
%
Neither of these techniques allow subset matching for MP-PPRL where records that match across subsets of databases are also identified in addition to records that match across all databases.



A secure multi-party computation approach using an oblivious
transfer protocol was proposed
by O'Keefe et al.~\cite{Kee04} 
for PPRL on multiple
databases. While provably secure, the approach can only perform \emph{exact matching}
(i.e.\ variations and errors in the QIDs
are not considered). 
Kantarcioglu et al.~\cite{Kan08} 
introduced a
MP-PPRL approach for \emph{categorical data} 
to perform secure equi-joins (\emph{exact matching}) 
on $k$-anonymous  
databases, where the QIDs of a record
are similar to at least $k$ other records
in the database~\cite{Swe02}. 
%
An \emph{exact matching} approach
for \emph{categorical data} was recently proposed
by Karapiperis et al.~\cite{Kara15}
using a Count-Min 
sketch data structure.
Sketches are used to summarize
the local set of elements which are then intersected to provide a 
global synopsis using homomorphic operations and symmetric noise
addition techniques~\cite{Clif02,Lin09}.
Another \emph{exact matching} approach for MP-PPRL 
using Bloom filter (BF) encoding was introduced by Lai et al.~\cite{Lai06}, where a conjuncted BF is jointly constructed by all parties to identify matching records.



All the MP-PPRL techniques described above are not practical 
in real applications as they allow only exact matching or matching of categorical data.
Vatsalan and Christen extended Lai et al.'s exact matching approach~\cite{Lai06}
to develop an \emph{approximate matching} solution for MP-PPRL~\cite{Vat14c}
by using BFs and a secure summation protocol~\cite{Clif02,Lin09} to
distributively calculate the similarity of a set of
BFs from different parties. 
A recent approach for \emph{approximate matching} in MP-PPRL based on Counting
Bloom filter (CBF) was proposed by Vatsalan et al.~\cite{Vat16b}. 
BFs from $p$ different databases were summarized into
a single CBF by applying a secure summation
protocol. 
Neither of these MP-PPRL approaches, however, supports identifying matching records in subset of parties.

Only limited grouping techniques have been developed to identify a set of matching records from multiple databases in the literature. Merge-based grouping simply groups or merges into one set all the records that have a similarity above the threshold~\cite{Ran14b}. The greedy best link approach proposed by Kendrick~\cite{Ken98} links each incoming record to the group that has the highest similarity with the incoming record.
An improved version of the best link approach was later proposed by Randall et al.~\cite{Ran15b}, which is referred to as weighted best link. In this approach all the records in the incoming file are first linked with the matching group of records, and then they are amalgamated according to the order of their weights. The advantage of the weighted best link approach is that it does not depend on the order of incoming records. However, the results depend on how the weights are calculated. Our proposed incremental clustering approaches are not only independent of the ordering of records 
but also the weights of links. 


Scalability of PPRL has been addressed 
through the development of private blocking functions~\cite{Chr11,Ran15,Ran16}, and the more recently proposed summarization algorithms~\cite{Kara19}. 
However, the number of comparisons required for multi-party linkage
remains very large even with such existing private blocking and
filtering approaches employed~\cite{Vat14c,Ran14}. 
Recent work by Vatsalan et al.~\cite{Vat16b} proposed improved communication patterns
for reducing the number of comparisons for CBF-based
MP-PPRL. 
%
The na\"{\i}ve computation complexity of MP-PPRL techniques is exponential
in the number of records per database ($n^p$, assuming $n$ 
records in each of the $p$ databases). The improved communication patterns developed
by Vatsalan et al.~\cite{Vat16b} reduce this exponential growth with $p$ down 
to the ring size $r$ (with $r < p$).
In contrast, our proposed approach efficiently performs subset matching with a
quadratic computation
complexity in the size and number of databases ($O(n^2 \cdot p^2)$), which allows large-scale
MP-PPRL.


\section{Conclusion}
\label{sec-conclusion}
We have presented a scalable MP-PPRL protocol
that is highly efficient for practical applications, such as health data linkage, and it improves the linkage quality
compared to existing MP-PPRL approaches that 
only allow identifying matching records across all
databases and do not support subset matching.
Our protocol uses graph-based incremental clustering
to achieve efficient identification of matching records
across all and subsets of large databases.

An experimental evaluation conducted on large real
datasets (including 26 voter registration databases
each containing over 5 million records and 5 real
emergency admissions datasets each containing
around 700,000 records) shows that our approach is practical
for real large-scale MP-PPRL applications.
Our approach outperforms existing MP-PPRL approaches in terms of linkage quality and scalability. 

In future work, we aim to investigate how
existing clustering algorithms for record
linkage~\cite{Has09c,Nan19,Sae18,Cha11,Yi17} can be adapted for MP-PPRL. 
One important direction of this work is to study incremental clustering for dynamic data matching in MP-PPRL~\cite{Li10}.
We also plan to evaluate the impact of pre-processing techniques, especially dealing with missing values~\cite{Ani19}, on the performance of privacy-preserving clustering.
Another direction is to study how incremental
clustering can be parallelized to improve scalability
of large-scale MP-PPRL. 
Investigating other advanced mapping, encoding, similarity, and classification functions (including relational clustering and collective classification~\cite{Chr12}) for clustering-based MP-PPRL would also be interesting directions for future work.


\section{Acknowledgements}

This work was partially funded by the Australian Research Council
under Discovery Projects DP130101801 and DP160101934, and
Universities Australia and the German Academic Exchange Service
(DAAD). We would like to thank Sean Randall from the Centre for
Data Linkage, Curtin University, for conducting experiments
using our proposed methods on the sensitive NSWE dataset.





%
%
%

\section{References} 

\bibliographystyle{elsarticle-num}
\bibliography{paper} 

\balance

\end{document}